\documentclass[11pt,onecolumn]{article}
\setlength{\topmargin}{-.6in}
\setlength{\textwidth}{6.5in}
\setlength{\evensidemargin}{0.0in}
\setlength{\oddsidemargin}{0.0in}
\setlength{\textheight}{9in}

\usepackage{times}
\usepackage{amsmath}
\usepackage{amssymb}
\usepackage{xspace}
\usepackage{theorem}
\usepackage{graphicx}
\usepackage{subfig}
\usepackage{epsfig}
\usepackage{ifpdf}
\usepackage{url,hyperref}
\usepackage{latexsym}
\usepackage{euscript}
\usepackage{xspace}
\usepackage{color}
\usepackage{makeidx}
\usepackage{picins,wrapfig}
\usepackage{stackrel}
\usepackage{amscd}
\usepackage[all]{xy}
\usepackage{multirow}
\usepackage{lineno}

\long\def\remove#1{}

\newtheorem{theorem}{Theorem}[section] 
\newtheorem{lemma}[theorem]{Lemma}
\newtheorem{claim}[theorem]{Claim}

\newtheorem{proposition}[theorem]{Proposition}

\newenvironment{proof}{{\em Proof:}}{\hfill{\hfill\rule{2mm}{2mm}}}

\newcommand {\mm}[1] {\ifmmode{#1}\else{\mbox{\(#1\)}}\fi}

\newcommand{\eps}{{\varepsilon}}
\newcommand{\reals}	{{\rm I\!\hspace{-0.025em} R}}
\newcommand{\sphere}	{\mathbf{S}} 
\newcommand{\BB}			{\mathrm {\mathbb{B}}}
\newcommand{\Z}			{\mathrm {\mathbb{Z}}}

\newcommand{\MM}		{\mathrm {\mathbb{M}}}
\newcommand{\HH}		{\mathrm {\sf{H}}}
\newcommand{\redHH}	{\tilde{\mathrm {\sf{H}}}}
\newcommand{\X}			{\mathbb{X}}
\newcommand{\cirD}		{\mathring{D}}

\newcommand{\A}			{\mathrm {\mathbb{A}}}

\newcommand{\etal}      {et al.\@\xspace}

\newcommand{\inD}		{m}
\newcommand{\amD}		{d}

\newcommand{\M}		{{\sf M}}
\newcommand{\N}		{{\sf N}}
\newcommand{\nerv} {{\cal N}}

\newcommand{\rank}		{\mm {\rm rank}}

\newcommand{\Rips}	{{\cal R}}

\newcommand {\myparagraph}[1] {\vspace{0.15in} \noindent {\bf {#1}}}



\newcommand{\clX}		{\mathcal{X}}

\begin{document}

\title{Dimension Detection by Local Homology}

\author{
Tamal K. Dey\thanks{
Department of Computer Science and Engineering,
The Ohio State University, Columbus, OH 43210, USA.
Email: {\tt tamaldey@cse.ohio-state.edu}}
\quad\quad
Fengtao Fan\thanks{
Department of Computer Science and Engineering,
The Ohio State University, Columbus, OH 43210, USA.
Email: {\tt fanf@cse.ohio-state.edu}}
\quad\quad Yusu Wang\thanks{
Department of Computer Science and Engineering,
The Ohio State University, Columbus, OH 43210, USA.
Email: {\tt yusu@cse.ohio-state.edu}}
}
\date{}
\maketitle
\thispagestyle{empty}
\setcounter{page}{0}

\begin{abstract}
Detecting the dimension of a hidden manifold 
from a point sample has become an important
problem in the current data-driven era. 
Indeed, estimating the shape dimension is often the first step 
in studying the processes or phenomena  associated to the data. 
Among the many dimension detection algorithms proposed
in various fields, a few can provide theoretical guarantee on the 
correctness of the estimated dimension. However, the correctness 
usually requires certain regularity of the input: 
the input points are either uniformly randomly sampled in a 
statistical setting, or they form the so-called $(\eps,\delta)$-sample 
which can be neither too dense nor too sparse. 

Here, we propose a purely topological technique to detect dimensions.
Our algorithm is provably correct and works under
a more relaxed sampling condition: 
we do not require uniformity, and we also allow Hausdorff noise. 
Our approach detects dimension by determining 
local homology. The computation of this topological structure is much 
less sensitive to the local distribution of points, 
which leads to the relaxation of the sampling conditions. 
Furthermore, by leveraging various developments in computational topology, 
we show that this local homology at a point $z$ can be computed \emph{exactly} 
for manifolds using Vietoris-Rips complexes whose vertices are confined 
within a local neighborhood of $z$. 
We implement our algorithm and demonstrate the accuracy and 
robustness of our method using both synthetic and real data sets. 
\end{abstract}

\newpage

\section{Introduction}


A fundamental problem in the current data-centric era is to estimate various qualitative structures from input data. Very often, the data is represented as a set of points sampled from a hidden domain. 
In particular, recent years have witnessed tremendous interest and progress in the field of \emph{manifold learning}, where the hidden domain is assumed to be a manifold $\M$ embedded in an ambient Euclidean space $\reals^d$. 
The intrinsic dimension of the manifold $\M$ is one of the simplest, yet still very important, quantities that one would like to infer from input data. 
Indeed, the dimension of $\M$ reflects the degree of freedom of the dynamic process that generates the data, and/or the number of variables necessary to describe the hidden domain. 
Hence, its estimation is crucial to our understanding of the processes or phenomena associated to the data. 

In this paper, we present an algorithm to estimate the intrinsic dimension of a manifold $\M$ from a set of noisy point samples $P \subset \reals^d$ \emph{on and around} $\M$. Our algorithm is based on the topological concept of local homology which was first investigated by Bendich \etal{} in the discrete setting
~\cite{Bendich:2007}. We show that our estimation is provably correct under appropriate sampling conditions and choice of parameters. 

\myparagraph{Related work.}
The problem of dimension estimation has been studied in 
various fields including pattern recognition, artificial intelligence and machine learning; see e.g., surveys \cite{Cam03,VD95}. If the domain of interest is linear, then the principal component analysis (PCA) \cite{J02} is perhaps the most popular method to estimate its dimension. However, PCA fails for non-linear domains and the curvature of the domain tends to cause PCA to overestimate the dimension. Fukunaga and Olsen pioneered the idea of using a local PCA applied to points within small neighborhoods for the non-linear case \cite{FO71}, and several variants have been developed along this direction \cite{BS98,LMR11}. In particular, Little \etal{} developed a multi-scaled version of the local PCA idea \cite{LMR11} that can achieve certain guarantee for points 
possibly corrupted with Gaussian noise, but
uniformly sampled from a hidden manifold.
A different approach estimates the manifold dimension based on the growth rate of the volume (or some analog of it) of an intrinsic ball~\cite{CV02,FSA07,GP83,HA05,PBJD79}. 
Both types of approaches above usually work in the statistical setting, where the input points are assumed to be sampled from some probabilistic distribution whose support is concentrated on the hidden manifold. 

In the computational geometry community, Dey \etal{} \cite{DGGZ03} provided the first provably correct approach to estimate the dimension of a manifold $\M$ from a so-called $(\eps,\delta)$-sample of $\M$, which enforces a regularity of the point samples by requiring that these points are both $\eps$-dense and $\delta$-sparse. 
Their approach requires constructing the Voronoi diagram for input points, the computational cost of which becomes prohibitive when the ambient dimension is high. 
Requiring the same $(\eps,\delta)$-sampling condition from input points, Giesen and Wagner \cite{GW03} introduced the so-called adaptive neighborhood graph, and then locally fit (approximately) the best affine subspace under the $L_\infty$ norm to each sample point $p$ and its neighbors in this graph. The time complexity of their algorithm is exponential only in the intrinsic dimension and the detected  dimension is correct for appropriate parameters. 
Cheng \etal{} improved this result by applying a local PCA to each sample point and its neighbors in the adaptive neighborhood graph \cite{CWW05}. They also showed that a small amount of Hausdorff noise (of the order $\eps^2$ times the local feature size) and a sparse set of outliers can be tolerated in the input points. 
More recently, Cheng \etal \cite{CC09} proposed an algorithm to estimate dimension by detecting the so-called slivers. This algorithm works in a statistical setting, and assumes that the input points are sampled from the hidden manifold using a Poisson process without noise. 

In this paper we develop a dimension-detection method based on the topological concept of local homology. 
The idea of using local homology to understand spaces from sampled points was first proposed by Bendich \etal{}  \cite{Bendich:2007}. Specifically, they introduced multi-scale representations of local homology to infer on stratified spaces, and developed algorithms to compute these representations using the weighted Delaunay triangulation. 
This line of work was further developed in \cite{BWM12} where the so-called local homology transfer was proposed to cluster points from different strata. In a recent paper~\cite{bei:2012}, Skraba and Wang proposed to approximate the multi-scale representations of local homology using families of Rips complexes. 
Rips complexes are more suitable than the Delaunay triangulations for points sampled from \emph{low} dimensional compact sets embedded in \emph{high} dimensional space and have attracted much attention in topology inference \cite{ALS11,Chazal:2008,bei:2012}. 

\myparagraph{Our results.} 
Given a smooth $\inD$-dimensional manifold $\M$ embedded in $\reals^d$, the local homology group $\HH(\M, \M - z)$ at a point $z \in \M$ is isomorphic to the reduced homology group of a $\inD$-dimensional sphere, that is $\HH(\M, \M-z) \cong \redHH(\sphere^\inD)$. 
Hence, given a set of noisy sample points $P$ of $\M$, we aim to detect the dimension of $\M$ by estimating $\HH(\M, \M-z)$ from $P$. 
Specifically, we assume that $P$ is an $\eps$-sample\footnote{Note that this definition of $\eps$-sample allows points in $P$ to be $\eps$ distance off the manifold $\M$. Our $\eps$-sampling condition is with respect to the \emph{reach} of $\M$ while that used in \cite{CC09,CWW05,DGGZ03,GW03} is with respect to local feature size and thus adaptive.} of $\M$ in the sense that the Hausdorff distance between $P$ and $\M$ is at most $\eps$. 
Our main result is that by inspecting two nested neighborhoods around a sample point $p \in P$ and considering certain relative homology groups computed from the Rips complexes induced by points within these neighborhoods, one can recover the local homology \emph{exactly}; see Theorem \ref{thm:mainthm}. 
This in turn provides a provably correct dimension-detection algorithm for an $\eps$-sample $P$ of a hidden manifold $\M$ when $\eps$ is small enough.

Compared with previous provable results in \cite{CC09,CWW05,DGGZ03,FSA07,GW03,LMR11}, our theoretical guarantee on the estimated dimension is obtained with a more relaxed sampling condition on $P$.
Specifically, there is no uniformity requirement for the sample points $P$, which was required by all previous dimension-estimation algorithms with theoretical guarantees: either in the form of a uniform random sampling in the statistical setting \cite{CC09,FSA07,LMR11} or the $(\eps,\delta)$-sampling in the 
deterministic setting \cite{CWW05,DGGZ03,GW03}. We also allow larger amount of noise ($\eps$ vs. $\eps^2$ as in \cite{CC09}). 
Such a relaxation in the sampling condition is primarily made possible by considering the topological information, which is much less sensitive to the distribution of points compared to the approaches based on local fitting. 

In Section \ref{exp}, we provide preliminary experimental results of our algorithm on both synthetic and 
real data. For synthetic data our method detects the right dimension
robustly. For real data some of which are laden with high noise
and undersampling, not all points return the correct dimension. But,
taking advantage of the fact that local homology is trivial in all but 
zero and intrinsic dimension of the manifold, we can eliminate 
most false positives and estimate the correct dimension from appropriately chosen points. 

Finally, we remark that similar to the recent work in \cite{bei:2012}, our computation of local homology uses the Rips complex, which is much easier to construct than the ambient Delaunay triangulation as was originally required in \cite{Bendich:2007}. Different from \cite{bei:2012}, we aim to compute $\HH(\M, \M-z)$ \emph{exactly} for the special case when $\M$ is a manifold, while the work in \cite{bei:2012} \emph{approximates} the multiscale representations of local homology (the persistence diagram of certain filtration) for more general compact sets. We also note that, unlike~\cite{bei:2012} our algorithm
operates with Rips complexes that span vertices within a local
neighborhood, thus saving computations. The goals from these two works are somewhat complementary and the two approaches address different technical issues.

\section{Preliminaries and Notations}

\paragraph{Manifold and sample.}
Let $\M$ be a compact smooth $\inD$-dimensional manifold without boundary embedded
in an Euclidean space $\reals^d$. The {\em reach} $\rho(\M)$ is the
minimum distance of any point in ${\M}$ to its medial axis.
A finite point set $P\subset \reals^d$ is an
\emph{$\eps$-sample} of $\M$ if every point $z\in \M$ satisfies
$d(z,P)\leq \eps$ and every point $p \in P$ satisfies $d(p, \M) \leq \eps$; in other words, the \emph{Hausdorff distance} between $P$ and $\M$ is at most $\eps$.

\myparagraph{Balls.}
An Euclidean closed ball with radius $r$ and center $z$ is denoted
$B_r(z)$. The open ball with the same center and radius is denoted
$\mathring{B}_r(z)$ and its complement 
$\reals^d\setminus \mathring{B}_r(z)$
is denoted $B^r(z)$.

\myparagraph{Homology.} We denote the $i$-th dimensional homology
group of a topological space $X$ as $\HH_i(X)$. We drop $i$ and write
$\HH(X)$ when a statement holds for all dimensions. 
We mean by $\HH(X)$ the singular homology
if $X$ is a manifold or a subset of $\reals^d$, and simplicial
homology if $X$ is a simplicial complex. Both homologies are
assumed to be defined with $\mathbb{Z}_2$ coefficients. We make 
similar assumptions to denote the relative homology
groups $\HH(X,A)$ for $A\subseteq X$. Notice that both $\HH(X)$ and $\HH(X,A)$
are vector spaces because they are defined with $\Z_2$ coefficients. 
The following two known results will be used several times in this paper. 

\begin{proposition}[\cite{Chazal:2008}]
Let $\HH(A)\rightarrow \HH(B)\rightarrow \HH(C)\rightarrow \HH(D)\rightarrow \HH(E)
\rightarrow \HH(F)$ be a sequence of homomorphisms.
If $\rank (\HH(A)\rightarrow \HH(F))=\rank(\HH(C)\rightarrow \HH(D))=k$,
then $\rank (\HH(B)\rightarrow \HH(E))=k$.
\label{rank}
\end{proposition}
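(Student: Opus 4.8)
The plan is to prove this as a statement in pure linear algebra about composable homomorphisms of $\Z_2$-vector spaces, using only the elementary fact that the rank of a composition cannot exceed the rank of any factor: if $g\colon U\to V$ and $h\colon V\to W$ are linear, then $\rank(h\circ g)\le\min(\rank g,\rank h)$, since $h(g(U))$ is at once a subspace of $h(V)$ and a homomorphic image of $g(U)$. (All the homology groups here are $\Z_2$-vector spaces, as noted in the Preliminaries, so ``rank'' means dimension.) I would label the five maps of the sequence $f_1,\dots,f_5$ in order and set $\phi=f_4\circ f_3\circ f_2\colon \HH(B)\to\HH(E)$, the map whose rank must be determined, and $\psi=f_5\circ f_4\circ f_3\circ f_2\circ f_1\colon \HH(A)\to\HH(F)$ for the long composite, so that $\rank\psi=k$ and $\rank f_3=k$ by hypothesis.

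The core idea is that $\psi$ and $f_3$ factor through $\phi$ from opposite sides. For the lower bound, observe $\psi=f_5\circ\phi\circ f_1$, so $\operatorname{im}\psi\subseteq f_5(\operatorname{im}\phi)$ and hence $k=\rank\psi\le\rank\phi$ by the lemma. For the upper bound, $\phi=f_4\circ f_3\circ f_2$ gives $\operatorname{im}\phi=f_4(f_3(f_2(\HH(B))))\subseteq f_4(\operatorname{im} f_3)$, which is a homomorphic image of $\operatorname{im} f_3$, so $\rank\phi\le\rank f_3=k$. Combining the two inequalities yields $\rank\phi=k$, which is exactly the assertion $\rank(\HH(B)\to\HH(E))=k$.

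I do not expect a genuine obstacle here; it is a short, diagram-chase-free computation. The only points worth stating carefully are the elementary rank-of-composition lemma above and the observation that it does not require the sequence to be exact — it is merely a sequence of homomorphisms — which is precisely what makes the proposition usable in the settings where we later invoke it.
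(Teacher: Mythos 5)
Your proof is correct. The paper itself does not prove this proposition---it cites it directly from Chazal and Oudot (2008)---so there is no in-paper argument to compare against; your sandwich argument (bounding $\rank\phi$ from below via $\psi=f_5\circ\phi\circ f_1$ and from above via $\phi=f_4\circ f_3\circ f_2$, using only that the rank of a composition is at most the rank of any factor) is exactly the standard proof given in the cited source, and you are right to emphasize that no exactness is needed.
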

\vspace*{-0.2in}\begin{proposition}[Steenrod-five lemma (Lemma 24.3 in ~\cite{Munk75})]
Suppose we have the commutative diagram of homology groups and homomorphisms:
$$
\xymatrix
{
&\HH_i(A) \ar[r] \ar[d]^{f_1}
& \HH_i(X) \ar[r] \ar[d]^{f_2}
& \HH_i(X,A) \ar[r] \ar[d]^{f_3}
& \HH_{i-1}(A) \ar@{->}[r] \ar[d]^{f_4}
& \HH_{i-1}(X) \ar[d]^{f_5}
&
\\
&\HH_i(B) \ar[r]
& \HH_i(Y) \ar[r]
& \HH_i(Y,B) \ar[r]
& \HH_{i-1}(B) \ar[r]
& \HH_{i-1}(Y) 
&
}
$$
where the horizontal sequences are exact. If $f_1,f_2,f_4$, and $f_5$
are isomorphisms, so is $f_3$.
\label{fivelem}
\end{proposition}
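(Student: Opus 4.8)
\medskip
\noindent\textbf{Proof proposal.}
This is the classical five lemma, so the plan is a straightforward diagram chase; since all the groups are $\Z_2$-vector spaces there are no signs to track. I would first relabel the top exact row as $A_1 \to A_2 \to A_3 \to A_4 \to A_5$ (so $A_3 = \HH_i(X,A)$) and the bottom one as $B_1 \to B_2 \to B_3 \to B_4 \to B_5$, writing $f_j \colon A_j \to B_j$ for the vertical maps and $\alpha_j,\beta_j$ for the horizontal ones. Then I would prove the two halves separately: that $f_3$ is surjective, and that $f_3$ is injective; together these give that $f_3$ is an isomorphism.

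For surjectivity I would start from an arbitrary $b_3 \in B_3$, push it to $\beta_3(b_3) \in B_4$, and use surjectivity of $f_4$ to choose $a_4 \in A_4$ with $f_4(a_4) = \beta_3(b_3)$. Commutativity together with exactness of the bottom row gives $f_5(\alpha_4(a_4)) = \beta_4\beta_3(b_3) = 0$, so injectivity of $f_5$ forces $\alpha_4(a_4) = 0$; exactness of the top row at $A_4$ then yields $a_3 \in A_3$ with $\alpha_3(a_3) = a_4$. By commutativity and the choice of $a_4$, the element $b_3 + f_3(a_3)$ lies in the kernel of $\beta_3$, hence in the image of $\beta_2$; lifting a $\beta_2$-preimage through the surjection $f_2$ and applying commutativity once more exhibits $b_3$ as a value of $f_3$.

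For injectivity, suppose $f_3(a_3) = 0$. Then $f_4(\alpha_3(a_3)) = 0$, so injectivity of $f_4$ gives $\alpha_3(a_3) = 0$, and exactness of the top row at $A_3$ produces $a_2 \in A_2$ with $\alpha_2(a_2) = a_3$. Commutativity gives $\beta_2(f_2(a_2)) = 0$, so $f_2(a_2)$ lies in the image of $\beta_1$; lifting a $\beta_1$-preimage through the surjection $f_1$ to some $a_1 \in A_1$ and using injectivity of $f_2$ identifies $\alpha_1(a_1)$ with $a_2$, whence $a_3 = \alpha_2\alpha_1(a_1) = 0$ because consecutive maps along the exact top row compose to zero.

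The only real difficulty here is bookkeeping: making sure each of the four isomorphism hypotheses is spent in exactly the right place. In fact the chase yields a bit more (the ``four lemma''): surjectivity of $f_3$ uses only that $f_2,f_4$ are onto and $f_5$ is injective, and injectivity of $f_3$ uses only that $f_2,f_4$ are injective and $f_1$ is onto. Since the statement is always applied here with all four maps being isomorphisms, I would record only the clean version above; alternatively, the result can simply be quoted from Munkres (Lemma 24.3 in \cite{Munk75}).
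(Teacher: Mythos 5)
Your proof is correct, and it is the standard diagram chase for the five lemma. The paper itself supplies no proof for this proposition; it is quoted verbatim as a known result (Lemma 24.3 in Munkres), so there is no "paper's approach" to compare against beyond the citation. Your chase is exactly the one you would find in Munkres or any other reference: surjectivity of $f_3$ from surjectivity of $f_2, f_4$ and injectivity of $f_5$; injectivity of $f_3$ from injectivity of $f_2, f_4$ and surjectivity of $f_1$. Your closing remark correctly identifies this as the "four lemma" refinement, and your observation that $\Z_2$ coefficients spare you from sign bookkeeping (so $b_3 - f_3(a_3) = b_3 + f_3(a_3)$) is accurate and harmless. In short, the proposal fills in a proof that the paper deliberately omits, and does so correctly.
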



\paragraph{Overview of approach. }
We are given an $\eps$-sample $P=\{p_i\}_{i=1}^n$ of a compact smooth $\inD$-manifold $\M$ embedded in $\reals^\amD$. 
However, the intrinsic dimension $\inD$ of $\M$ is not known, and our goal is to estimate $\inD$ from the point sample $P$. 
Note that for any point $z \in \M$, we have that $\HH(\M, \M - z) 
\cong \redHH(\sphere^\inD)$ where $\redHH(\cdot)$ denotes the
reduced homology. Thus $\rank(\HH_i(\M,\M-z))=1$ if and only if
$i= m$. Hence, if we can compute the rank of $\HH_i(\M, \M-z)$ for every $i$, then we can recover the dimension of $\M$. 
This is the approach we will follow. 
In Section \ref{sec:offsets}, we first relate $\HH(\M, \M-z)$ with the topology of the offset of the point set $P$. This requires us to inspect the deformation retraction from the offset to $\M$ carefully. 
The relation to the offset, in turns, allows us to provably recover the rank of $\HH(\M, \M-z)$ using the so-called Vietoris Rips complex, which we detail in Section \ref{sec:Rips}. One key ingredient here is to use only local neighborhoods of a sample point to obtain the estimate. 
First, in Section \ref{sec:localhom}, we derive several technical results to prepare for the development of our approach in Section \ref{sec:offsets} and \ref{sec:Rips}. 

\section{Local Homology of $\M$ and its Offsets}
\label{sec:localhom}


\paragraph{Local homology $\HH(\M,\M-z)$.}
In this section, we develop a few results that we use later. 
First, we relate the target local homology groups $\HH(\M, \M-z)$ to 
some other local homology which becomes useful later for connecting to
the local homology of Rips complexes that are ultimately used in the algorithm. 
We start by quoting the following known result: 
\begin{proposition}[\cite{Dey06}]
Let $B_r(p)$ be a closed Euclidean ball so that it
intersects the $\inD$-manifold ${\M}$ in more than one point.
If $r<\rho(\M)$, then ${\M}\cap B_r(p)$ is a closed topological $\inD$-ball. 
\label{ball-intersect}
\end{proposition}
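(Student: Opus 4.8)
The plan is to read off the homeomorphism type of $\M\cap B_r(p)$ from the critical points of the smooth, proper function $f\colon\M\to\reals$, $f(x)=\|x-p\|^{2}$, noting that $\M\cap B_r(p)=f^{-1}([0,r^{2}])$. Since $r<\rho(\M)$ and the intersection has more than one point, the distance $s:=d(p,\M)$ satisfies $s<r$ (otherwise $p$ would have two nearest points on $\M$, forcing $s\ge\rho(\M)$), and — because $s<\rho(\M)$ — there is a unique nearest point $x_0\in\M$ to $p$; it is the global minimum of $f$, with $f(x_0)=s^{2}$.

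First I would locate the critical points of $f$ with value at most $r^{2}$. A point $x\in\M$ is critical for $f$ iff $x-p\in N_x\M$. If $x$ is such a point with $\|x-p\|\le r<\rho(\M)$, I claim $x=x_0$: assuming $x\ne p$ and setting $c:=x+\rho(\M)\,\frac{p-x}{\|x-p\|}$, the standard fact that the open ball $\mathring B_{\rho(\M)}(c)$ is disjoint from $\M$ gives $\|y-c\|\ge\rho(\M)$ for all $y\in\M$, which together with $\|p-c\|=\rho(\M)-\|x-p\|$ and the triangle inequality yields $\|y-p\|\ge\|x-p\|$ for every $y\in\M$, with equality only at $y=x$; hence $x$ is the unique nearest point of $\M$ to $p$, i.e. $x=x_0$. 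Therefore $f$ has no critical value in $(s^{2},r^{2}]$; in particular $r^{2}$ is a regular value, so $\M\cap B_r(p)$ is a compact smooth $\inD$-manifold with boundary $f^{-1}(r^{2})=\M\cap\partial B_r(p)$, and it lies in a single connected component of $\M$.

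Next I would invoke the standard, Morse-free part of the gradient-flow machinery: because $f$ is proper and has no critical point in $f^{-1}\big((s^{2},r^{2}]\big)$, a suitably truncated negative-gradient flow of $f$ gives a diffeomorphism of $f^{-1}([0,r^{2}])$ onto $f^{-1}([0,s^{2}+\delta])$ for every small $\delta>0$. It then remains to see that $f^{-1}([0,s^{2}+\delta])$ is a topological $\inD$-ball for small $\delta$, which follows from a routine second-order analysis of $f$ near $x_0$: writing $\M$ locally as a graph over $T_{x_0}\M$ and using that the reach bounds the second fundamental form of $\M$ by $1/\rho(\M)$, one finds that in the tangential coordinate $f-s^{2}$ has a strictly positive quadratic leading term (the curvature correction is damped by the factor $s/\rho(\M)<1$), so the small sublevel set is star-shaped about $x_0$ and hence homeomorphic to the closed unit ball $\bar B^{\inD}$. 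Combining the two steps, $\M\cap B_r(p)$ is a closed topological $\inD$-ball.

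The hard part — and the reason this is not a one-line application of Morse theory — is that $f$ need not be a Morse function: its Hessian at the minimum $x_0$, or along the level set $f^{-1}(r^{2})$, may be degenerate. One must therefore stay away from handle decompositions and instead lean on the more robust statement that sublevel sets across a critical-value-free interval are diffeomorphic, combined with the explicit, reach-controlled description of $f$ near its minimum. The assumption $r<\rho(\M)$ enters essentially at both points: it forces a single interior critical point of $f$ (via uniqueness of the nearest point), and it keeps the curvature of $\M$ near $x_0$ small enough that $x_0$ behaves like a genuine quadratic minimum.
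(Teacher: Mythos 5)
The paper itself gives no proof of Proposition~\ref{ball-intersect}; it is quoted as a known result from Dey's book~\cite{Dey06}. Your proof is the standard Morse-theoretic argument for this fact, and it is correct: you show that the squared-distance function $f(x)=\|x-p\|^2$ has $x_0=\pi_\M(p)$ as its unique critical point on $f^{-1}([0,r^2])$ (the tangent-ball argument via the reach is exactly right), so $r^2$ is a regular value, the sublevel set is a compact manifold with boundary, and gradient flow retracts it onto a small sublevel set near $x_0$, which is a ball by the second-order expansion of $f$. This is, to my knowledge, essentially the argument used in the cited reference (there stated for surfaces in $\reals^3$; your version works verbatim for $m$-manifolds in $\reals^d$). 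One small point: your closing remarks overstate the obstruction to plain Morse theory. As you yourself compute, the Hessian of $f$ at $x_0$ in tangential coordinates is $I + s\,S_\nu$ with $\|S_\nu\|\le 1/\rho(\M)$ and $s<\rho(\M)$, hence \emph{nondegenerate} (positive definite); since $x_0$ is the only critical point on $f^{-1}([0,r^2])$ and $r^2$ is regular, the Morse lemma plus the no-critical-value deformation already give the ball directly. The only genuine degeneracy concern is for auxiliary functions restricted to level sets, which your argument never needs. It is also worth making explicit, as you leave implicit, that $f^{-1}(r^2)\neq\emptyset$: otherwise the component of $\M$ containing $x_0$ would attain its maximum of $f$ at an interior critical point distinct from the minimum $x_0$, contradicting the uniqueness you established (for $\inD\ge 1$).
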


\begin{proposition}
Let $D \subset \M$ be a closed topological $\inD$-ball from the $\inD$-manifold $\M$, and $z\in \M$ a point contained in the interior $\cirD$ of $D$. 
Then $\HH(\M,\M-\cirD)\stackrel{i_*}{\rightarrow} \HH(\M, \M-z)$ is an isomorphism.
\label{point-ball-lem}
\end{proposition}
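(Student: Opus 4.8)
The plan is to reduce the statement, via the homology long exact sequences of the pairs $(\M,\M-\cirD)$ and $(\M,\M-z)$ together with the Steenrod five lemma (Proposition~\ref{fivelem}), to the single assertion that the inclusion $\M-\cirD\hookrightarrow\M-z$ induces an isomorphism on all homology groups. Note first that $z\in\cirD$, so $\M-\cirD\subseteq\M-z$ and the inclusion of pairs $i\colon(\M,\M-\cirD)\hookrightarrow(\M,\M-z)$ is legitimate. The two long exact sequences assemble into a commutative ladder in which the vertical maps at the $\HH(\M)$ terms are the identity; hence if the vertical maps $\HH(\M-\cirD)\to\HH(\M-z)$ are isomorphisms, Proposition~\ref{fivelem} (applied with $A=\M-\cirD$, $X=\M$, $B=\M-z$, $Y=\M$, and $f_3$ the map $i_*$ in question) forces $\HH(\M,\M-\cirD)\to\HH(\M,\M-z)$ to be an isomorphism as well.

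So it remains to show $\M-\cirD\hookrightarrow\M-z$ is a strong deformation retract. Fix a homeomorphism $\phi\colon D\to\BB^{\inD}$ onto the standard closed unit ball, which exists since $D$ is a closed topological $\inD$-ball; by invariance of domain $\phi$ carries $\cirD$ onto the open ball and $\partial D:=D\setminus\cirD$ onto $\sphere^{\inD-1}$, and after composing with a self-homeomorphism of $\BB^{\inD}$ fixing $\sphere^{\inD-1}$ we may assume $\phi(z)=\mathbf 0$. The radial map $H(y,t)=(1-t)y+t\,y/|y|$ is a deformation retraction of $\BB^{\inD}\setminus\{\mathbf 0\}$ onto $\sphere^{\inD-1}$ which is the identity on $\sphere^{\inD-1}$, so $\widetilde H:=\phi^{-1}\circ H\circ(\phi\times\id)$ is a deformation retraction of $D\setminus\{z\}$ onto $\partial D$ fixing $\partial D$ pointwise. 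Since $\widetilde H$ is the identity on $\partial D$, it agrees there with the identity map of $\M-\cirD$; and $D\setminus\{z\}$ and $\M-\cirD$ are both closed in $\M-z$ and cover it, so the pasting lemma glues $\widetilde H$ and the constant homotopy on $\M-\cirD$ into a deformation retraction of $\M-z$ onto $\M-\cirD$. Therefore the inclusion is a homotopy equivalence and induces isomorphisms on all $\HH_i$, which by the first paragraph finishes the proof.

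The main technical point — and the only place where one must be careful because $D$ is merely a \emph{topological} ball, with no smooth structure to lean on — is the gluing step. One must check that the radial retraction stays inside $D\setminus\{z\}$ for all $t$ (it does: $H(y,t)=\bigl[(1-t)+t/|y|\bigr]y$ has norm $(1-t)|y|+t\in(0,1]$ whenever $0<|y|\le 1$), and that the glued homotopy is continuous across $\partial D$; the latter is exactly the pasting lemma applied to the two closed pieces, and is consistent with intuition since the displacement $|H(y,t)-y|=t\,(1-|y|)$ tends to $0$ as $y\to\sphere^{\inD-1}$. Everything else is the formal machinery of the five lemma applied to the long exact sequences of the two pairs.
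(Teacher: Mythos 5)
Your proof is correct and follows essentially the same route as the paper's: both reduce, via the commutative ladder of long exact sequences of the pairs $(\M,\M-\cirD)$ and $(\M,\M-z)$ together with the Steenrod five lemma, to the single claim that the inclusion $\M-\cirD\hookrightarrow\M-z$ induces isomorphisms on all homology groups. The only difference is that the paper simply asserts that $\M-z$ deformation retracts onto $\M-\cirD$ because $D$ is a closed topological ball, whereas you supply the explicit radial-retraction-plus-pasting construction that justifies that assertion.
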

\begin{proof}
Consider the following diagram where the two horizontal
sequences are exact and all vertical maps are induced by inclusions:
$$
\xymatrix
{
&\HH_i(\M-\cirD) \ar[r] \ar[d]^{i'_*}
& \HH_i(\M) \ar[r] \ar[d]^{\cong}
& \HH_i(\M,\M-\cirD) \ar[r] \ar[d]^{i_*}
& \HH_{i-1}(\M-\cirD) \ar@{->}[r] \ar[d]^{i'_*}
& \HH_{i-1}(\M) \ar[d]^{\cong}
&
\\
&\HH_i(\M-z) \ar[r]
& \HH_i(\M) \ar[r]
& \HH_i(\M,\M-z) \ar[r]
& \HH_{i-1}(\M-z) \ar[r]
& \HH_{i-1}(\M) 
&
}
$$
As all vertical homomorphisms are induced by inclusions, the above
diagram commutes, see Theorem 5.8 in Rotman~\cite{Rotman}.
Consider the inclusion $(\M-\cirD) \stackrel{i'}{\hookrightarrow} (\M-z)$.
Since $D$ is a closed topological ball, $\M-z$ deformation retracts
to $\M-\cirD$. The inclusion $i'$ is a homotopy inverse
of the retraction $(\M-z) \rightarrow (\M-\cirD)$ and hence $i'_*$ is an 
isomorphism. Since the first, second, fourth and fifth vertical
homomorphisms in the above diagram are isomorphisms, $i_*$ is also an
isomorphism by Proposition~\ref{fivelem}.
\end{proof}

\vspace*{0.08in}We can extend Proposition~\ref{point-ball-lem} a little further. See Appendix \ref{appendix:ball-ball-lem} for the proof. 
\begin{proposition}
Let $D_1$ and $D_2$ be two closed topological balls containing $z$ 
in the interior where
$D_1\subseteq D_2 \subseteq \M$. 
The inclusion-induced homomorphisms $i_*'$ and $i_*$ in the following sequence are isomorphisms:  \\
\hspace*{1.5in}$\HH(\M,\M-\cirD_2)\stackrel{i'_*}{\rightarrow} 
\HH(\M,\M-\cirD_1)\stackrel{i_*}{\rightarrow} (\M,\M-z). $
\label{ball-ball-lem}
\end{proposition}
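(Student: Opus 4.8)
The plan is to derive everything from Proposition~\ref{point-ball-lem}, applied once to $D_1$ and once to $D_2$, together with the functoriality of relative homology.

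First I would check that the maps in the statement are even well defined, i.e. that $\M-\cirD_2\subseteq \M-\cirD_1\subseteq \M-z$. The second inclusion is immediate since $z\in\cirD_1$ by hypothesis. For the first, note that $\cirD_1$ is open in $\M$ and, since $D_1\subseteq D_2$, satisfies $\cirD_1\subseteq D_2$; an open subset of $D_2$ must lie in the interior $\cirD_2$, so $\cirD_1\subseteq\cirD_2$ and hence $\M-\cirD_2\subseteq\M-\cirD_1$. With the subspaces nested this way, the composite inclusion of pairs $(\M,\M-\cirD_2)\hookrightarrow(\M,\M-\cirD_1)\hookrightarrow(\M,\M-z)$ equals the single inclusion $(\M,\M-\cirD_2)\hookrightarrow(\M,\M-z)$, so on homology $i_*\circ i'_*$ is precisely the homomorphism induced by that single inclusion of pairs.

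Next I would invoke Proposition~\ref{point-ball-lem} twice. Applied to the closed topological ball $D_1$ with $z\in\cirD_1$, it shows that $i_*\colon\HH(\M,\M-\cirD_1)\to\HH(\M,\M-z)$ is an isomorphism. Applied to the closed topological ball $D_2$ with $z\in\cirD_2$, it shows that the inclusion-induced map $\HH(\M,\M-\cirD_2)\to\HH(\M,\M-z)$ is an isomorphism; by the previous paragraph this map is exactly $i_*\circ i'_*$. Finally, from the identity $i'_*=i_*^{-1}\circ(i_*\circ i'_*)$ and the fact that both $i_*$ and $i_*\circ i'_*$ are isomorphisms, we conclude that $i'_*$ is an isomorphism as well, which finishes the proof.

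I do not expect a genuine obstacle here: all of the real content is already packaged in Proposition~\ref{point-ball-lem} (whose proof in turn uses the deformation retraction $\M-z\simeq\M-\cirD$ and the five lemma, Proposition~\ref{fivelem}). The only places that need a moment's care are the point-set verification $\cirD_1\subseteq\cirD_2$, which makes $i'_*$ well defined, and the observation that a composite of inclusion-induced maps on relative homology is induced by the composite inclusion of pairs.
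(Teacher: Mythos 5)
Your proof is correct and follows essentially the same route as the paper's: apply Proposition~\ref{point-ball-lem} once to $D_1$ (giving $i_*$ an isomorphism) and once to $D_2$ (giving the composite $j_* = i_*\circ i'_*$ an isomorphism), then cancel to conclude $i'_*$ is an isomorphism. The only addition is the explicit point-set verification that $\cirD_1\subseteq\cirD_2$, which the paper leaves implicit but which is a reasonable thing to record.
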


\myparagraph{Local homology of the offset.}
Later we wish to relate the local homology $\HH(\M,\M-z)$ at a point $z$ to the local
homology of an $\alpha$-offset of an $\eps$-sample $P=\{p_i\}_{i=1}^n$, defined as 
$${\X}_\alpha = \cup_{i=1}^{n} B_\alpha(p_i), 
\mbox{ the union of balls centered at every $p_i$ with radius $\alpha$}.
$$
For this, we will need a map to connect the two spaces, which is 
provided by the following projection map: 
$$\pi_{\alpha} : { \X}_\alpha \rightarrow {\M} \mbox{ given by } 
x \mapsto \mathrm{argmin}_{z\in {\M}} d(x,z).
$$
Choose $\alpha < \rho(\M)-\eps$. Since $P$ is an $\eps$-sample, no point
of $\X_\alpha$ is $\rho(\M)$ or more
away from $\M$. This means that no point
of the medial axis of $\M$ is included in $\X_\alpha$. Therefore,
the map $\pi$ is well defined.
Furthermore, by the following result of \cite{Niyogi:2008}, $\pi$ is a deformation retraction for appropriate choices of parameters. In fact, under this projection map, the pre-image of a point has a nice structure (star-shaped). 
\begin{proposition}[pp.22, \cite{Niyogi:2008}] If $P$ is an $\eps$-sample of $\M$ with reach $\rho =\rho(\M)$
where $0<\eps < (3-\sqrt{8})\rho$ and 
$\alpha \in (\frac{(\eps + \rho)-\sqrt{\eps^{2} + \rho^{2} - 6\eps\rho}}{2}, 
\frac{(\eps + \rho)+\sqrt{\eps^{2} + \rho^{2} - 6\eps\rho}}{2})$,
then, for any $x\in \pi_\alpha^{-1}(z)$, the segment
$xz$ lies in $\pi^{-1}_\alpha(z)$.
\label{fiber}
\end{proposition}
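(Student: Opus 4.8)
The plan is to fix $x \in \pi_\alpha^{-1}(z)$, take an arbitrary point $y$ on the segment $xz$, and verify the two conditions whose conjunction says $y \in \pi_\alpha^{-1}(z)$: that $y$ lies in the offset $\X_\alpha$ (so that $\pi_\alpha$ is defined at $y$), and that the nearest point of $\M$ to $y$ is $z$.

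I would settle the second condition first. The key preliminary observation is that $x$ sits comfortably inside the tubular neighborhood of $\M$: since $x \in B_\alpha(p_i)$ for some $i$ with $d(p_i,\M)\le\eps$, we have $d(x,\M)\le\alpha+\eps$, and a short computation shows $\alpha+\eps<\rho$ whenever $\alpha$ and $\eps$ satisfy the stated bounds (this is the first place the hypothesis $\eps<(3-\sqrt{8})\rho$ enters, if only to keep the square root real). Since $d(x,\M)<\rho$, the point $x$ is off the medial axis, so it has a unique nearest point $z=\pi_\alpha(x)$ and the segment $xz$ runs along the normal space of $\M$ at $z$. A purely metric argument then finishes: if $z'\in\M$ has $d(y,z')\le d(y,z)$ for a point $y$ on $xz$, then $d(x,z')\le d(x,y)+d(y,z')\le d(x,y)+d(y,z)=d(x,z)$, and uniqueness of the nearest point of $x$ forces $z'=z$; hence $z$ is also the (unique) nearest point of $\M$ to $y$.

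The real content is the first condition: the segment $xz$ must not escape the union of balls $\X_\alpha$. I would parametrize $xz$ as $z+s\nu$, $s\in[0,r]$, where $\nu$ is the unit normal at $z$ with $x=z+r\nu$ and $r=d(x,\M)$, and cover $[0,r]$ by two of the defining balls. Near $z$: the $\eps$-density of $P$ supplies a sample point $p'$ with $d(z,p')\le\eps$, so $d(z+s\nu,p')\le s+\eps\le\alpha$ for every $s\le\alpha-\eps$ (one checks $\alpha>\eps$ from the stated lower bound on $\alpha$), and thus $B_\alpha(p')$ covers $s\in[0,\alpha-\eps]$. Near $x$: putting $t=\langle p_i-z,\nu\rangle$, one has $|x-p_i|^2=(r-t)^2+|p_i-z-t\nu|^2\le\alpha^2$ and $|z+s\nu-p_i|^2=(s-t)^2+|p_i-z-t\nu|^2$, so $|z+s\nu-p_i|^2\le\alpha^2$ whenever $(s-t)^2\le(r-t)^2$, i.e. $|s-t|\le|r-t|$; hence $B_\alpha(p_i)$ covers a subinterval $[s_0,r]$ ending at $r$ (with $s_0=2t-r$ when $t\le r$). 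The two balls then cover all of $[0,r]$ provided $s_0\le\alpha-\eps$, which reduces matters to an upper bound on the normal coordinate $t$ of $p_i$. For that I would invoke the rolling-ball property of a set of reach at least $\rho$: the open ball of radius $\rho$ tangent to $\M$ at $z$ on the $\nu$-side is disjoint from $\M$, so every sample point — in particular $p_i$ — lies at distance at least $\rho-\eps$ from its center $z+\rho\nu$; combining $|p_i-(z+\rho\nu)|\ge\rho-\eps$ with $|p_i-x|\le\alpha$, $x=z+r\nu$, and the crude bound $r\le\alpha+\eps$ then bounds $t$ from above in terms of $\eps$, $\rho$, $\alpha$.

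The hard part is exactly this last estimate and the optimization around it: one must carry the chain of inequalities above and optimize over the free quantities — the distance $r=d(x,\M)$ and the transverse component of $p_i$ — and it is this optimization that produces the precise admissible window for $\alpha$. I expect the constraint to collapse to the single quadratic inequality $\alpha^2-(\eps+\rho)\alpha+2\eps\rho\le 0$, whose roots are the endpoints $\frac{(\eps+\rho)\pm\sqrt{\eps^2+\rho^2-6\eps\rho}}{2}$ appearing in the statement and which admits a solution precisely when the discriminant $\eps^2+\rho^2-6\eps\rho$ is nonnegative, that is, when $\eps\le(3-\sqrt{8})\rho$. Everything else — $\M\subseteq\X_\alpha$, well-definedness of $\pi_\alpha$, and the elementary inequalities used above — is routine bookkeeping.
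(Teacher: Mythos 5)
This proposition is not proved in the paper; it is cited verbatim (with the $\eps$-Hausdorff noise adaptation) from Niyogi, Smale, and Weinberger~\cite{Niyogi:2008}, so there is no internal argument to set yours against. Judged on its own, your plan is structurally correct and does close, though you stop short of carrying out the decisive estimate. You correctly split the claim into two parts --- that every $y$ on $xz$ still has $z$ as its unique nearest manifold point (the triangle-inequality argument is fine, using uniqueness of $x$'s nearest point and the fact that $d(x,\M)\le\alpha+\eps<\rho$ keeps $x$ off the medial axis), and that $y$ stays inside $\X_\alpha$ --- and the two-ball covering of the segment is exactly the right mechanism for the second part.

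Carrying out the bookkeeping you defer confirms your guess. The case $r\le\alpha-\eps$ is immediate (the ball around a sample within $\eps$ of $z$ already covers the whole segment), so assume $r>\alpha-\eps$. Subtracting $(r-t)^2+|w|^2\le\alpha^2$ from the rolling-ball inequality $(t-\rho)^2+|w|^2\ge(\rho-\eps)^2$ eliminates the transverse component $|w|=|p_i-z-t\nu|$ and gives $(r-\rho)(2t-r-\rho)\ge(\rho-\eps)^2-\alpha^2$; since $r<\rho$ this rearranges to $2t-r\le\rho-\frac{(\rho-\eps)^2-\alpha^2}{\rho-r}$, and the right side is at most $\alpha-\eps$ for every $r\ge\alpha-\eps$ precisely when $(\rho-\alpha+\eps)^2\le(\rho-\eps)^2-\alpha^2$, i.e.\ $\alpha^2-(\rho+\eps)\alpha+2\eps\rho\le 0$, whose roots are the stated $\theta_1,\theta_2$ and which admits real solutions exactly when $\eps^2+\rho^2-6\eps\rho\ge 0$, i.e.\ $\eps\le(3-\sqrt{8})\rho$. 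One correction to your narrative: the binding case is $r$ small (near $\alpha-\eps$), not $r$ large, so the ``crude bound $r\le\alpha+\eps$'' you invoke is not what drives the optimization; make the case split at $r=\alpha-\eps$ explicit and the sketch becomes a complete proof.
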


For convenience denote 
$\theta_1=\frac{(\eps + \rho)-\sqrt{\eps^{2} + \rho^{2} - 6\eps\rho}}{2}$ and 
$\theta_2= \frac{(\eps + \rho)+\sqrt{\eps^{2} + \rho^{2} - 6\eps\rho}}{2}$
and observe that $\eps\leq \theta_1$ and
$\theta_2 \leq \rho(\M)-\eps$ for $\eps,\rho>0$. We have:

\begin{proposition}
Let $0<\eps < (3-\sqrt{8})\rho(\M)$ 
and  $\theta_1 \leq \alpha \leq 
\theta_2$. 
Let $\A_{\alpha} = \pi^{-1}_\alpha({\N})$ where ${\N} \subseteq {\M}$ may be  
either an open or a closed subset. 
Then $\pi_\alpha: \A_{\alpha} \rightarrow {\N}$ 
is a retraction and
${\N}$ is a deformation retract of $\A_\alpha$.
\label{deform}
\end{proposition}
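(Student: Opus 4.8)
The plan is to write down an explicit (strong) deformation retraction realized by the straight-line homotopy that pulls each point of $\A_\alpha$ along the segment to its nearest point on $\M$. Define $H : \A_\alpha \times [0,1] \to \reals^d$ by $H(x,t) = (1-t)\,x + t\,\pi_\alpha(x)$, and take the retraction to be $\pi_\alpha$ restricted to $\A_\alpha$. Everything then reduces to checking four things: that $\pi_\alpha|_{\A_\alpha}$ is indeed a retraction onto $\N$, that $H$ is continuous, that $H$ fixes $\N$ throughout, and --- the only genuinely substantive point --- that $H$ never leaves $\A_\alpha$.

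I would dispatch the first three by bookkeeping. Since $P$ is an $\eps$-sample and $\eps \le \theta_1 \le \alpha$, every $z \in \M$ has $d(z,P) \le \eps \le \alpha$, so $\M \subseteq \X_\alpha$ and hence $\pi_\alpha(z) = z$ for all $z \in \M$; in particular $\N \subseteq \A_\alpha$, the map $\pi_\alpha|_{\A_\alpha}$ is the identity on $\N$, and $H(x,t) = x$ for all $x \in \N$ and all $t$. As noted just before the statement, $\alpha \le \theta_2 \le \rho(\M) - \eps$ keeps $\X_\alpha$ off the medial axis of $\M$, so $\pi_\alpha$ is well defined; and on this reach-sized neighborhood of $\M$ the nearest-point projection onto $\M$ is continuous --- classical for sets of positive reach --- so $H$, being assembled from $x$, $t$, and the continuous $\pi_\alpha$, is jointly continuous in $(x,t)$. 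Finally $H(\cdot,0) = \id_{\A_\alpha}$ and $H(\cdot,1) = \pi_\alpha|_{\A_\alpha}$ has image in $\N$.

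The crux is showing $H(x,t) \in \A_\alpha$ for every $x \in \A_\alpha$ and $t \in [0,1]$, and this is exactly what Proposition~\ref{fiber} buys us: for $\alpha$ in the stated range the fiber $\pi_\alpha^{-1}(\pi_\alpha(x))$ contains the whole segment from $x$ to $\pi_\alpha(x)$, which is precisely the trajectory $\{H(x,t):t\in[0,1]\}$. Since $x \in \A_\alpha = \pi_\alpha^{-1}(\N)$ we have $\pi_\alpha(x) \in \N$, so $\pi_\alpha^{-1}(\pi_\alpha(x)) \subseteq \pi_\alpha^{-1}(\N) = \A_\alpha$, and the trajectory stays in $\A_\alpha$. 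Note that $\X_\alpha$ is in general not convex, so one really cannot avoid invoking the star-shapedness of fibers here. I also note that the argument is insensitive to whether $\N$ is open or closed --- that only controls whether $\A_\alpha$ is open or closed --- so both cases are handled uniformly.

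The one loose end I anticipate is a mismatch at the endpoints: Proposition~\ref{fiber} is stated for $\alpha$ in the \emph{open} interval $(\theta_1,\theta_2)$, whereas here $\alpha$ is allowed to equal $\theta_1$ or $\theta_2$. I would handle the two endpoint values either by observing that the star-shapedness of the fibers is a closed condition on $\alpha$ and re-running the relevant estimate, or by a straightforward limiting argument from interior values; I do not expect this to present a real obstacle. Apart from that, the proof is routine, with the continuity of the reach projection $\pi_\alpha$ being the only external fact used.
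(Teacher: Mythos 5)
Your proof is correct and takes exactly the same route as the paper's: define the straight-line homotopy $F(x,t)=(1-t)x+t\,\pi_\alpha(x)$ and use the star-shapedness of the fibers $\pi_\alpha^{-1}(z)$ from Proposition~\ref{fiber} to see that the trajectory stays inside $\A_\alpha=\pi_\alpha^{-1}(\N)$. The paper's version is terser --- it does not explicitly spell out continuity of $\pi_\alpha$ nor flag the open-interval endpoint issue in Proposition~\ref{fiber} as you do --- but the argument is identical in substance.
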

\begin{proof}
Notice that due to Proposition~\ref{fiber}, $\pi^{-1}_\alpha(z)$ is star shaped meaning that every point $x \in \pi^{-1}_\alpha(z)$ has the segment $xz$ lying in $\pi^{-1}_\alpha(z)$. It follows that ${\N}\subseteq \A_{\alpha}$ and there exists a straight line deformation retraction 
$F: \A_{\alpha}\times I \rightarrow \A_{\alpha}$ defined as $F(x,t) = (1-t)x + t\pi(x)$.
The proposition then follows.
\end{proof}

Based on the above observation, the map 
$\pi_\alpha : (\X_\alpha, \A_{\alpha}) \rightarrow ({\M} , {\N})$ 
seen as a map on the pairs provides an
isomorphism at the homology level.

\begin{proposition} 
Let
$0<\eps < (3-\sqrt{8})\rho$ 
and  
$\theta_1 \leq \alpha \leq 
\theta_2$. 
The homomorphism 
$\pi_{\alpha *}: \HH(\X_\alpha, \A_{\alpha}) \rightarrow \HH({\M} , {\N})$
is an isomorphism.
\label{pair-iso}
\end{proposition}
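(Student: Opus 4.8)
The plan is to apply the five lemma (Proposition~\ref{fivelem}) to the ladder of long exact sequences induced by the map of pairs $\pi_\alpha:(\X_\alpha,\A_\alpha)\to(\M,\N)$. First I would check that $\pi_\alpha$ is honestly a map of pairs: from the proof of Proposition~\ref{deform} we have $\N\subseteq\A_\alpha$, and $\A_\alpha=\pi_\alpha^{-1}(\N)$ by definition, so $\pi_\alpha(\A_\alpha)=\N$. Functoriality of the long exact sequence of a pair in singular homology then gives, for each $i$, a commutative diagram with exact rows:
$$
\xymatrix{
\HH_i(\A_{\alpha}) \ar[r] \ar[d]^{\pi_{\alpha *}}
& \HH_i(\X_\alpha) \ar[r] \ar[d]^{\pi_{\alpha *}}
& \HH_i(\X_\alpha, \A_{\alpha}) \ar[r] \ar[d]^{\pi_{\alpha *}}
& \HH_{i-1}(\A_{\alpha}) \ar[r] \ar[d]^{\pi_{\alpha *}}
& \HH_{i-1}(\X_\alpha) \ar[d]^{\pi_{\alpha *}}
\\
\HH_i({\N}) \ar[r]
& \HH_i({\M}) \ar[r]
& \HH_i({\M},{\N}) \ar[r]
& \HH_{i-1}({\N}) \ar[r]
& \HH_{i-1}({\M})
}
$$
No regularity hypothesis on the pairs is needed here, since the long exact sequence and its naturality hold for singular homology of arbitrary pairs.

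Next I would argue that the four outer vertical maps are isomorphisms. Applying Proposition~\ref{deform} to the given subset $\N$ shows that $\pi_\alpha:\A_\alpha\to\N$ is a retraction with $\N$ a deformation retract of $\A_\alpha$; hence $\pi_{\alpha *}\colon\HH(\A_\alpha)\to\HH(\N)$ is an isomorphism, with inverse the map induced by the inclusion $\N\hookrightarrow\A_\alpha$ (indeed, if $j$ denotes this inclusion, then $\pi_\alpha\circ j=\id_\N$ and $j\circ\pi_\alpha\simeq\id_{\A_\alpha}$ via the straight-line homotopy $F$). Applying Proposition~\ref{deform} with $\N$ replaced by all of $\M$, and noting $\pi_\alpha^{-1}(\M)=\X_\alpha$, gives that $\M$ is a deformation retract of $\X_\alpha$, so $\pi_{\alpha *}\colon\HH(\X_\alpha)\to\HH(\M)$ is an isomorphism as well. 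Since the restriction of $\pi_\alpha$ to $\A_\alpha$ is precisely the retraction $\A_\alpha\to\N$, the same isomorphism appears in each absolute column of the ladder, so the first, second, fourth and fifth vertical arrows are all isomorphisms. Proposition~\ref{fivelem} then forces the middle arrow $\pi_{\alpha *}\colon\HH_i(\X_\alpha,\A_\alpha)\to\HH_i(\M,\N)$ to be an isomorphism for every $i$, which is the claim.

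There is no real obstacle; the argument is essentially bookkeeping around the five lemma. The only points deserving care are (i) verifying that the hypotheses $0<\eps<(3-\sqrt{8})\rho$ and $\theta_1\le\alpha\le\theta_2$ assumed in the statement are exactly those of Proposition~\ref{deform}, so that the deformation retraction is simultaneously available for $\N$ and for the special case $\M$ itself; and (ii) confirming the commutativity of the ladder, which is the standard naturality of the connecting homomorphism under maps of pairs. Both are routine, so the five lemma completes the proof.
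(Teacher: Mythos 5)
Your proposal is correct and follows essentially the same route as the paper: set up the ladder of long exact sequences of the pairs under the map $\pi_\alpha$, observe via Proposition~\ref{deform} that the four absolute vertical arrows are isomorphisms (applying it once to $\N$ and once to $\M=\pi_\alpha^{-1}(\M)$-image, i.e.\ $\X_\alpha$), and invoke the five lemma. The only difference is that you spell out the routine checks (that $\pi_\alpha$ is a map of pairs and that the inclusion is a homotopy inverse) which the paper leaves implicit.
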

\begin{proof}
The map $\pi_\alpha$ provides the following commutative diagram 
(Theorem 5.8, Rotman~\cite{Rotman}):
$$
\xymatrix
{
\ar[r]
&\HH_i(\A_{\alpha}) \ar[r] \ar[d]^{\pi_{\alpha*}}
& \HH_i(\X_\alpha) \ar[r] \ar[d]^{\pi_{\alpha*}}
& \HH_i(\X_\alpha, \A_{\alpha}) \ar[r] \ar[d]^{\pi_{\alpha*}} 
& \HH_{i-1}(\A_{\alpha}) \ar@{->}[r] \ar[d]^{\pi_{\alpha*}} 
& \HH_{i-1}(\X_\alpha) \ar@{->}[r] \ar[d]^{\pi_{\alpha*}} 
&
\\
\ar[r]
&\HH_i({\N}) \ar[r]  
& \HH_i({\M}) \ar[r] 
& \HH_i({\M}, {\N}) \ar[r]  
& \HH_{i-1}({\N}) \ar[r] 
& \HH_{i-1}({\M}) \ar[r] 
&
}
$$
The first, second, fourth, and fifth vertical maps are 
restrictions of $\pi_{\alpha*}$ and thus are all isomorphisms
by Proposition~\ref{deform}. It follows from
Proposition~\ref{fivelem} that the third vertical map 
is an isomorphism as well.
\end{proof}

\begin{proposition}
Let 
$0<\eps < (3-\sqrt{8})\rho$, 
and  
$\theta_1 \leq \alpha < \alpha' \leq 
\theta_2$.
Let ${\N}\subset {\N}' $ be two closed (or open) sets of ${\M}$,
and $\A_{\alpha}= \pi^{-1}_\alpha({\N})$ and $\A_{\alpha'}= \pi^{-1}_{\alpha'}({\N}')$. 
Denoting by $\mathrm{im} (\cdot)$ the image of a map, we have 
$$\mathrm{im}\left(\HH(\mathbb{X}_\alpha, \A_{\alpha}) \rightarrow \HH(\mathbb{X}_{\alpha'}, \A_{\alpha'})\right) \cong
 \mathrm{im}\left(\HH({\M}, {\N}) \rightarrow \HH({\M}, {\N}')\right). $$
\label{persistence-equiv}
\vspace*{-0.3in}
\end{proposition}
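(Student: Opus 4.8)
The plan is to realize both arrows in the statement as inclusion-induced maps, fit them into a commutative square together with the projection maps $\pi_\alpha$ and $\pi_{\alpha'}$, and then let Proposition~\ref{pair-iso} do almost all the work.

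First I would verify that we genuinely have inclusions of pairs. Since $\alpha<\alpha'$ we have $\X_\alpha\subseteq\X_{\alpha'}$. Both $\pi_\alpha$ and $\pi_{\alpha'}$ are restrictions of the single nearest-point projection $\pi$ onto $\M$ (well defined on $\X_{\alpha'}$ because $\alpha'\le\theta_2\le\rho(\M)-\eps$ keeps $\X_{\alpha'}$ off the medial axis), so any $x\in\A_\alpha=\pi^{-1}_\alpha(\N)$ lies in $\X_{\alpha'}$ and satisfies $\pi(x)\in\N\subseteq\N'$, i.e.\ $x\in\A_{\alpha'}$. Hence $j\colon(\X_\alpha,\A_\alpha)\hookrightarrow(\X_{\alpha'},\A_{\alpha'})$ and $k\colon(\M,\N)\hookrightarrow(\M,\N')$ are honest inclusions of pairs, and the two homomorphisms appearing in the statement are $j_*$ and $k_*$.

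Next I would check that the square commutes on the nose: for $x\in\X_\alpha$ one has $\pi_{\alpha'}(j(x))=\pi(x)=\pi_\alpha(x)=k(\pi_\alpha(x))$, and both composites carry $\A_\alpha$ into $\N'$, so $\pi_{\alpha'}\circ j=k\circ\pi_\alpha$ as maps of pairs $(\X_\alpha,\A_\alpha)\to(\M,\N')$. Applying the homology functor gives the commutative diagram
$$
\xymatrix{
\HH(\X_\alpha,\A_\alpha) \ar[r]^{j_*} \ar[d]_{\pi_{\alpha*}} & \HH(\X_{\alpha'},\A_{\alpha'}) \ar[d]^{\pi_{\alpha'*}} \\
\HH(\M,\N) \ar[r]^{k_*} & \HH(\M,\N')
}
$$
whose vertical maps are isomorphisms by Proposition~\ref{pair-iso} (applied with parameter $\alpha$ and set $\N$ on the left, and with $\alpha'$ and $\N'$ on the right; the required $\theta_1\le\alpha$ and $\alpha'\le\theta_2$ are exactly the standing hypotheses).

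Finally I would close with a one-line diagram chase. Since $\pi_{\alpha*}$ is surjective, $\mathrm{im}(k_*)=\mathrm{im}(k_*\circ\pi_{\alpha*})=\mathrm{im}(\pi_{\alpha'*}\circ j_*)=\pi_{\alpha'*}\bigl(\mathrm{im}\,j_*\bigr)$, and since $\pi_{\alpha'*}$ is injective it restricts to an isomorphism $\mathrm{im}(j_*)\xrightarrow{\ \cong\ }\mathrm{im}(k_*)$, which is precisely the claimed isomorphism of images. I do not expect a real obstacle here; the only two points that need a moment of care are the containment $\A_\alpha\subseteq\A_{\alpha'}$ (which relies on $\pi_\alpha$ and $\pi_{\alpha'}$ being restrictions of one global projection) and confirming that Proposition~\ref{pair-iso} applies to both rows, both of which are immediate from the assumptions.
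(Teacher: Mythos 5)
Your argument is correct and follows the same route as the paper: both set up the commutative square of pairs $(\X_\alpha,\A_\alpha)\hookrightarrow(\X_{\alpha'},\A_{\alpha'})$ over $(\M,\N)\hookrightarrow(\M,\N')$, with vertical isomorphisms supplied by Proposition~\ref{pair-iso}. The only difference is cosmetic: the paper concludes by citing the Persistence Equivalence Theorem of Edelsbrunner--Harer, while you inline the (correct) one-line diagram chase that proves that theorem in this situation.
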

\begin{proof}
The projection maps $\pi_{\alpha}$ and $\pi_{\alpha'}$ (both being maps of pairs) 
result in the following commutative diagram of pairs.
\[
\xymatrix{
(\mathbb{X}_\alpha, \A_{\alpha}) \ar[d]^{\pi_{\alpha}} \ar@{^{(}->}[r] & (\mathbb{X}_{\alpha'}, \A_{\alpha'}) \ar[d]^{\pi_{\alpha'}}\\
({\M}, {\N}) \ar@{^{(}->}[r] & ({\M}, {\N}')}
\]
This diagram induces a commutative diagram at homology level, 
where $\pi_{\alpha*}$ and $\pi_{\alpha'*}$ are isomorphisms by 
Proposition~\ref{pair-iso}.
The claim now is immediate by the 
Persistence Equivalence Theorem~\cite{EH09}, page 159. 
\end{proof}
 
\section{Local Interleaving of Offsets}
\label{sec:offsets}


Let $p\in P$ be any sample point.
We show how to obtain the local homology of the 
projected point $\pi(p)$ on $\M$
from pairs of $p$'s local neighborhoods in $\X_\alpha$. 
The results from the previous section already allow us to relate the local homology of the projected point $\pi(p)$ with the local homology of some local neighborhoods in $\X_\alpha$ (which are the pre-image of some sets in $\M$). 
We now use interleaving to relate them further to local neighborhoods 
that are intersection of $\X_\alpha$ with Euclidean balls. 
Since $\pi(p)$ plays an important role here, we use a special symbol $\bar{p}=\pi(p)$ for it.
For convenience, we introduce notations 
(see Figure~\ref{spaces-fig}): 
$$
\MM_{\alpha,\beta}=\pi_\alpha^{-1}(\mathring{B}_{\beta}(p) \cap {\M}), \,\,
{\MM}^{\alpha,\beta}= \X_\alpha - \MM_{\alpha,\beta},
\mbox{ and } 
\BB_{\alpha,\beta} = \mathring{B}_\beta(p)\cap \X_\alpha, \, \,
{\BB}^{\alpha,\beta}= \X_\alpha - \BB_{\alpha,\beta}.
$$
\begin{figure}[ht!]
\begin{center}
\input{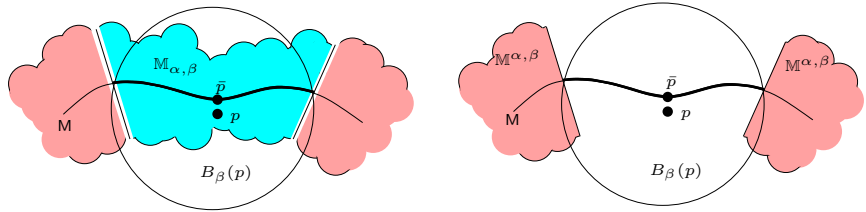}
\end{center}
\vspace*{-0.1in}
\caption{The spaces $\mathbb{M}_{\alpha,\beta}$ shown in cyan (left) and 
$\mathbb{M}^{\alpha,\beta}$ shown in pink (right).}
\label{spaces-fig}
\end{figure}

The following simple observation follows from Propositions \ref{point-ball-lem},
\ref{ball-intersect}, and \ref{deform}. 
\begin{proposition}
Let $D_\beta=B_\beta(p)\cap \M$. 
For 
$0<\eps < (3-\sqrt{8})\rho$ , $\eps < \beta < \rho(\M)$ 
and  
$\theta_1 \leq \alpha \leq \theta_2$, 
the maps $\pi_{\alpha*}$ and $i_*$ are isomorphisms in the sequence: 
$\HH(\X_\alpha,{\MM}^{\alpha,\beta})\stackrel{\pi_{\alpha*}}{\rightarrow}
\HH(\M,\M-\cirD_\beta)\stackrel{i_*}{\rightarrow}
\HH(\M,\M-\bar{p}). $
\label{projpair}
\end{proposition}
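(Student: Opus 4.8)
The plan is to break the sequence into its two maps and handle each by an already-established result, feeding the geometric hypotheses into the right place. For the map $i_*$, set $D_\beta = B_\beta(p) \cap \M$ and observe that since $\eps < \beta < \rho(\M)$ and $P$ is an $\eps$-sample, the ball $B_\beta(p)$ meets $\M$ in more than one point (indeed $\bar p = \pi(p)$ lies within $\eps < \beta$ of $p$, and a small neighborhood of $\bar p$ on the $\inD$-manifold also lies in $B_\beta(p)$). Hence Proposition~\ref{ball-intersect} applies and $D_\beta$ is a closed topological $\inD$-ball. I would then need to check that $\bar p$ lies in the interior $\cirD_\beta$ of $D_\beta$: this should follow because $d(p,\bar p) = d(p,\M) \le \eps < \beta$, so $\bar p$ is strictly inside $B_\beta(p)$, and therefore strictly inside $D_\beta = B_\beta(p)\cap\M$ as a point of $\M$. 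With $D_\beta$ a closed topological ball and $\bar p \in \cirD_\beta$, Proposition~\ref{point-ball-lem} gives that $i_*: \HH(\M, \M - \cirD_\beta) \to \HH(\M, \M - \bar p)$ is an isomorphism.

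For the map $\pi_{\alpha*}$, I want to apply Proposition~\ref{pair-iso} with the subset ${\N} \subseteq \M$ taken to be $\M - \cirD_\beta$, which is a closed subset of $\M$. Then $\A_\alpha := \pi_\alpha^{-1}(\M - \cirD_\beta)$, and by definition $\MM_{\alpha,\beta} = \pi_\alpha^{-1}(\mathring B_\beta(p) \cap \M) = \pi_\alpha^{-1}(\cirD_\beta)$, so ${\MM}^{\alpha,\beta} = \X_\alpha - \MM_{\alpha,\beta} = \X_\alpha - \pi_\alpha^{-1}(\cirD_\beta) = \pi_\alpha^{-1}(\M - \cirD_\beta) = \A_\alpha$, using that $\pi_\alpha$ maps all of $\X_\alpha$ onto $\M$. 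The hypotheses $0 < \eps < (3-\sqrt 8)\rho$ and $\theta_1 \le \alpha \le \theta_2$ are exactly those required by Proposition~\ref{pair-iso}, so $\pi_{\alpha*}: \HH(\X_\alpha, {\MM}^{\alpha,\beta}) \to \HH(\M, \M - \cirD_\beta)$ is an isomorphism. (The role of Proposition~\ref{deform} is internal to Proposition~\ref{pair-iso}, but the parenthetical in the statement is reminding the reader which deformation-retraction facts underlie this.)

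Composing the two isomorphisms gives the claim. I expect the only real point needing care to be the bookkeeping that $\M - \cirD_\beta$ is genuinely closed in $\M$ (it is, being the complement of the open set $\cirD_\beta = \mathring B_\beta(p)\cap \M$) so that Proposition~\ref{pair-iso} is applicable with ${\N}$ closed, together with the identity ${\MM}^{\alpha,\beta} = \pi_\alpha^{-1}(\M - \cirD_\beta)$, which relies on $\pi_\alpha$ being surjective onto $\M$ — true because $\M$ is a deformation retract of $\X_\alpha$ under $\pi_\alpha$ by Proposition~\ref{deform} applied to ${\N} = \M$. Everything else is a direct citation of the preceding propositions, and no new estimates on $\eps$, $\alpha$, or $\beta$ are introduced beyond what is already hypothesized.
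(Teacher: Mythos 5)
Your proof is correct and follows essentially the same route as the paper's: apply Proposition~\ref{ball-intersect} to see $D_\beta$ is a topological ball, then Proposition~\ref{point-ball-lem} for $i_*$, and apply the deformation-retraction/pair-isomorphism machinery (Propositions~\ref{deform} and~\ref{pair-iso}) for $\pi_{\alpha*}$. You are in fact slightly more careful than the paper's one-line version: you explicitly check that $\bar p$ lies in $\cirD_\beta$, you identify ${\MM}^{\alpha,\beta}$ with $\pi_\alpha^{-1}(\M-\cirD_\beta)$, and you cite Proposition~\ref{pair-iso} (the statement genuinely about pairs) rather than Proposition~\ref{deform} as the paper loosely does.
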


Now set $\delta=\alpha +3\eps$. 
Consider any $z\in \M$. 
Since any point $x\in \pi_\alpha^{-1}(z)$ resides
within a ball $B_\alpha(p_i)$ for some $p_i\in P$, we have that 
\begin{eqnarray}
d(x,z)=d(x,\pi(x))\leq d(x,\pi(p_i))\leq d(x,p_i) + d(p_i,\pi(p_i))  \leq \alpha+\eps=\delta - 2\eps.
\label{dist-nest}
\end{eqnarray} 
It follows that for any $\lambda\in(\eps,\rho(\M)-\delta)$
we get the following inclusions(see Appendix~\ref{appendix:sec:offsets} for details):
$$
\MM_{\alpha,\lambda}
\subset \BB_{\alpha,\lambda+\delta}
\subset \MM_{\alpha,\lambda+2\delta}
\subset \BB_{\alpha,\lambda+3\delta}
\subset \MM_{\alpha, \lambda+4\delta}.
$$
Taking the complements, a new filtration in the reverse direction is generated:
$$
{\MM}^{\alpha,\lambda+4\delta}
\subset
{\BB}^{\alpha,\lambda+3\delta}
\subset
{\MM}^{\alpha,\lambda+2\delta}
\subset
{\BB}^{\alpha,\lambda+\delta}
\subset
{\MM}^{\alpha,\lambda}.
$$
Considering each space as a topological pair, the nested sequence becomes
\begin{eqnarray}
(\X_\alpha, {\MM}^{\alpha,\lambda + 4 \delta})
\subset
(\X_\alpha, {\BB}^{\alpha,\lambda+3\delta})
\subset
(\X_\alpha, {\MM}^{\alpha,\lambda + 2 \delta})
\subset
(\X_\alpha, {\BB}^{\alpha,\lambda+\delta})
\subset
(\X_\alpha, {\MM}^{\alpha,\lambda })
\label{eq2half}
\end{eqnarray}
Inclusion between topological pairs induces a 
homomorphism between their relative homology groups.
Therefore, the following relative homology sequence holds.
\begin{eqnarray}
\HH(\X_\alpha, {\MM}^{\alpha,\lambda + 4 \delta})
\rightarrow
\HH(\X_\alpha, {\BB}^{\alpha,\lambda+3\delta})
\rightarrow
\HH(\X_\alpha, {\MM}^{\alpha,\lambda + 2 \delta})
\rightarrow
\HH(\X_\alpha, {\BB}^{\alpha,\lambda+\delta})
\rightarrow
\HH(\X_\alpha, {\MM}^{\alpha,\lambda })
\label{eq3}
\end{eqnarray}

\noindent Let $\epsilon \leq \alpha' \leq \rho({\M}) - \epsilon$ and $\delta' = \alpha' + 3\eps$. Similar to sequence~(\ref{eq2half}), for any $\lambda' \in (\eps, \rho(\MM) - 4\delta')$ we have:  
\begin{eqnarray}
(\mathbb{X}_{\alpha'}, \mathbb{M}^{\alpha',\lambda' + 4\delta'}) 
\subset (\mathbb{X}_{\alpha'}, \mathbb{B}^{\alpha',\lambda' + 3\delta'})
\subset (\mathbb{X}_{\alpha'}, \mathbb{M}^{\alpha',\lambda' + 2\delta'})
\subset (\mathbb{X}_{\alpha'}, \mathbb{B}^{\alpha',\lambda' + \delta'})
\subset (\mathbb{X}_{\alpha'}, \mathbb{M}^{\alpha',\lambda'})
\label{eq4}
\end{eqnarray}

The stated range of $\lambda,\lambda'$ is valid if 
$\alpha,\alpha' < \frac{\rho(\M)-13\eps}{4}$. We also need
$\theta_1\leq \alpha,\alpha'$. These two conditions are
satisfied for $\eps < \frac{\rho(\M)}{22}$. 
Let $\theta_2'= \frac{\rho(\M)-13\eps}{4}$.
\begin{proposition}
Let $0<\eps < \frac{\rho(\M)}{22}$, and  
$\theta_1 \leq \alpha \leq \alpha' \leq \theta_2'$. 
Set $\delta=\alpha + 3\eps$ and $\delta' = \alpha' + 3\eps$.   
For $\eps < \lambda'< \rho(\M)-4\delta'$ and $\lambda \ge  \lambda' + 2(\alpha'-\alpha)$, we have, 
\begin{eqnarray}
\mathrm{im}\left(\HH(\X_\alpha, {\BB}^{\alpha, \lambda+3\delta}) \rightarrow \HH(\X_{\alpha'}, {\BB}^{\alpha',\lambda'+\delta'})\right) \cong \HH({\M}, {\M} - \bar{p}). 
\label{eqn:im:prop}
\end{eqnarray}
In particular, 
$
\mathrm{im}\left(\HH(\X_\alpha, {\BB}^{\alpha,\lambda+3\delta}) \rightarrow \HH(\X_\alpha, {\BB}^{\alpha,\lambda+\delta})\right) \cong \HH({\M}, {\M} - \bar{p}). 
$
\label{im:prop}
\end{proposition}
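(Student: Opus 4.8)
We sketch the proof. The plan is to splice the $\alpha$-filtration~(\ref{eq2half}), the $\alpha'$-filtration~(\ref{eq4}), and the inclusion of offsets $\X_\alpha\subseteq\X_{\alpha'}$ into a single chain of inclusion-induced homomorphisms that alternates between $\MM$-type and $\BB$-type relative homology groups, and then to pin down the rank of the map in~(\ref{eqn:im:prop}) from Proposition~\ref{rank}, using the $\MM$-type groups — whose homology is already determined by Proposition~\ref{projpair} — as the outer and inner anchors. Since the image of a homomorphism of $\mathbb{Z}_2$-vector spaces is determined by its rank, and $\rank\HH_i(\M,\M-\bar p)=\rank\redHH_i(\sphere^\inD)$ equals $1$ for $i=\inD$ and $0$ otherwise, it suffices to show, in every degree $i$, that $\rank\bigl(\HH_i(\X_\alpha,{\BB}^{\alpha,\lambda+3\delta})\rightarrow\HH_i(\X_{\alpha'},{\BB}^{\alpha',\lambda'+\delta'})\bigr)=\rank\HH_i(\M,\M-\bar p)$.

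First I would record that the $\MM$-type pairs are rigid: for $\eps<\beta<\rho(\M)$ the map of Proposition~\ref{projpair} is an isomorphism $\HH(\X_\alpha,{\MM}^{\alpha,\beta})\xrightarrow{\cong}\HH(\M,\M-\bar p)$, and these isomorphisms are compatible both with enlarging $\beta$ and with enlarging the offset radius, because in either case the relevant pair inclusion commutes with the projection (the projection is the \emph{same} map restricted to a smaller offset) and on $\M$ it becomes the inclusion $\HH(\M,\M-\cirD_{\beta'})\rightarrow\HH(\M,\M-\cirD_\beta)$, which is an isomorphism by Proposition~\ref{ball-ball-lem} for $\eps<\beta'\le\beta<\rho(\M)$. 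Hence, whenever $\X_\alpha\subseteq\X_{\alpha'}$ and $\eps<\beta'\le\beta<\rho(\M)$, the pair inclusion $(\X_\alpha,{\MM}^{\alpha,\beta})\hookrightarrow(\X_{\alpha'},{\MM}^{\alpha',\beta'})$ induces a homology isomorphism. Next I would check the one inclusion not already in~(\ref{eq2half}) or~(\ref{eq4}), namely ${\MM}^{\alpha,\lambda+2\delta}\subseteq{\BB}^{\alpha',\lambda'+\delta'}$: any $x$ in the left side lies in $\X_\alpha\subseteq\X_{\alpha'}$ and has $d(\pi(x),p)\ge\lambda+2\delta$, so by~(\ref{dist-nest}) $d(x,p)\ge(\lambda+2\delta)-(\alpha+\eps)=\lambda+\alpha+5\eps\ge\lambda'+\alpha'+3\eps=\lambda'+\delta'$, where the middle inequality is exactly where the hypothesis $\lambda\ge\lambda'+2(\alpha'-\alpha)$ (with $\delta=\alpha+3\eps$, $\delta'=\alpha'+3\eps$) enters; thus $x\in{\BB}^{\alpha',\lambda'+\delta'}$.

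Together with ${\MM}^{\alpha,\lambda+4\delta}\subseteq{\BB}^{\alpha,\lambda+3\delta}\subseteq{\MM}^{\alpha,\lambda+2\delta}$ and ${\BB}^{\alpha',\lambda'+\delta'}\subseteq{\MM}^{\alpha',\lambda'}$ read off from the two filtrations, these inclusions yield the sequence of inclusion-induced homomorphisms
\begin{gather*}
\HH(\X_\alpha,{\MM}^{\alpha,\lambda+4\delta})\to\HH(\X_\alpha,{\BB}^{\alpha,\lambda+3\delta})\to\HH(\X_\alpha,{\MM}^{\alpha,\lambda+2\delta})\\
\xrightarrow{\,\id\,}\HH(\X_\alpha,{\MM}^{\alpha,\lambda+2\delta})\to\HH(\X_{\alpha'},{\BB}^{\alpha',\lambda'+\delta'})\to\HH(\X_{\alpha'},{\MM}^{\alpha',\lambda'}).
\end{gather*}
Its total composite is the pair inclusion $(\X_\alpha,{\MM}^{\alpha,\lambda+4\delta})\hookrightarrow(\X_{\alpha'},{\MM}^{\alpha',\lambda'})$, hence an isomorphism by the previous paragraph (using $\eps<\lambda'\le\lambda+4\delta<\rho(\M)$), and its middle ($C\to D$) map is the identity on $\HH(\X_\alpha,{\MM}^{\alpha,\lambda+2\delta})\cong\HH(\M,\M-\bar p)$; so in each degree the first-to-last composite and the $C\to D$ map have equal rank. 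Proposition~\ref{rank} then forces the $B\to E$ composite $\HH(\X_\alpha,{\BB}^{\alpha,\lambda+3\delta})\to\HH(\X_\alpha,{\MM}^{\alpha,\lambda+2\delta})\xrightarrow{\id}\HH(\X_\alpha,{\MM}^{\alpha,\lambda+2\delta})\to\HH(\X_{\alpha'},{\BB}^{\alpha',\lambda'+\delta'})$ to have the same rank; but that composite is just the pair inclusion $(\X_\alpha,{\BB}^{\alpha,\lambda+3\delta})\hookrightarrow(\X_{\alpha'},{\BB}^{\alpha',\lambda'+\delta'})$, i.e.\ the map in~(\ref{eqn:im:prop}). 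This establishes~(\ref{eqn:im:prop}), and the ``in particular'' statement is the special case $\alpha=\alpha'$, $\lambda=\lambda'$ (so $\delta'=\delta$ and $\lambda'+\delta'=\lambda+\delta$).

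The homotopy-theoretic content is carried entirely by Propositions~\ref{projpair},~\ref{ball-ball-lem} and~\ref{rank}; the rest is bookkeeping. The step most in need of care is the cross-offset inclusion ${\MM}^{\alpha,\lambda+2\delta}\subseteq{\BB}^{\alpha',\lambda'+\delta'}$ with the correct constants — this is where the numerology $\lambda\ge\lambda'+2(\alpha'-\alpha)$ and $\delta=\alpha+3\eps$ is actually pinned down — together with the verification that the long composite in the displayed chain really is the plain pair inclusion $(\X_\alpha,{\MM}^{\alpha,\lambda+4\delta})\hookrightarrow(\X_{\alpha'},{\MM}^{\alpha',\lambda'})$ (rather than some twisted map), so that the rigidity result applies; one must also make sure every intermediate $\MM$-pair has its radius parameter inside $(\eps,\rho(\M))$ — in particular $\lambda+4\delta<\rho(\M)$, which is implicit in the stated parameter ranges — so that Proposition~\ref{projpair} can be invoked.
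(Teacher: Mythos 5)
Your proof is correct and follows essentially the same approach as the paper: sandwich the $\BB$-type pairs between $\MM$-type anchors whose homology is pinned to $\HH(\M,\M-\bar p)$ via the projection and Proposition~\ref{ball-ball-lem}, then apply Proposition~\ref{rank}. The only cosmetic difference is that you repeat $\HH(\X_\alpha,\MM^{\alpha,\lambda+2\delta})$ as both the $C$ and $D$ terms with an identity map (verifying $\MM^{\alpha,\lambda+2\delta}\subseteq\BB^{\alpha',\lambda'+\delta'}$ directly), whereas the paper inserts the separate term $\HH(\X_{\alpha'},\MM^{\alpha',\lambda'+2\delta'})$ as $D$ and lets the hypothesis $\lambda\ge\lambda'+2(\alpha'-\alpha)$ give $\MM^{\alpha,\lambda+2\delta}\subseteq\MM^{\alpha',\lambda'+2\delta'}$; both splicings yield the same rank count.
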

\begin{proof}
Due to our choice of parameters, we have that $\lambda + 2\delta \ge \lambda' + 2\delta'$. 
From Eqn (\ref{eq2half}) and (\ref{eq4}), we obtain the following sequence of homomorphisms induced by inclusions: 
\begin{eqnarray*}
\HH(\mathbb{X}_\alpha, \mathbb{M}^{\alpha,\lambda + 4\delta}) \rightarrow
\HH(\mathbb{X}_\alpha, \mathbb{B}^{\alpha,\lambda + 3\delta}) \rightarrow
\HH(\mathbb{X}_\alpha, \mathbb{M}^{\alpha,\lambda + 2\delta}) \rightarrow \\
\HH(\mathbb{X}_{\alpha'}, \mathbb{M}^{\alpha',\lambda' + 2\delta'}) \rightarrow
\HH(\mathbb{X}_{\alpha'}, \mathbb{B}^{\alpha',\lambda' + \delta'}) \rightarrow
\HH(\mathbb{X}_{\alpha'}, \mathbb{M}^{\alpha',\lambda'}). 
\end{eqnarray*}
We first show
\begin{eqnarray}
\mathrm{im} \left( \HH(\mathbb{X}_\alpha, \mathbb{M}^{\alpha,\lambda + 4\delta}) \rightarrow \HH(\mathbb{X}_{\alpha'}, \mathbb{M}^{\alpha',\lambda'})\right)
\cong \mathrm{im} \left( \HH(\mathbb{X}_\alpha, \mathbb{M}^{\alpha,\lambda + 2\delta}) \rightarrow (\mathbb{X}_{\alpha'}, \mathbb{M}^{\alpha',\lambda' + 2\delta'})\right) 
\cong \HH({\M}, {\M} - \bar{p}). 
\label{eq10}
\end{eqnarray}
Consider the following commutative diagram where 
$\pi_\alpha$ and $\pi_{\alpha'}$ are seen as maps on pairs: 
\[
\xymatrix{
(\mathbb{X}_\alpha, \mathbb{M}^{\alpha, \lambda + 4\delta}) \ar[d]^{\pi_{\alpha}} \ar@{^{(}->}[r]
& (\mathbb{X}_{\alpha'}, \mathbb{M}^{\alpha', \lambda'}) \ar[d]^{\pi_{\alpha'}}\\
({\M}, {\M} - \mathring{D}_{\lambda + 4\delta}) \ar@{^{(}->}[r] & ({\M}, {\M}-\mathring{D}_{\lambda'}))}
\]
where $D_\beta=B_\beta(p)\cap \M$.
By Proposition~\ref{ball-ball-lem},
we have 
$$\mathrm{im}\left(\HH({\M}, {\M} - \mathring{D}_{\lambda + 4\delta}) \rightarrow
\HH({\M}, {\M}-\mathring{D}_{\lambda'})\right) \cong \HH({\M}, {\M} - \bar{p}). $$
Hence, $\mathrm{im} \left( \HH(\mathbb{X}_\alpha, \mathbb{M}^{\alpha, \lambda + 4\delta}) \rightarrow
\HH(\mathbb{X}_{\alpha'}, \mathbb{M}^{\alpha', \lambda'})\right)
\cong \HH({\M}, {\M} - \bar{p})$ by 
Proposition~\ref{persistence-equiv}.
The same argument implies that $\mathrm{im} \left( \HH(\mathbb{X}_\alpha, \mathbb{M}^{\alpha, \lambda + 2\delta}) \rightarrow
(\mathbb{X}_{\alpha'}, \mathbb{M}^{\alpha',\lambda' + 2\delta'})\right)
\cong \HH(\M, \M - \bar{p})$ which establishes the claim in~(\ref{eq10}).
%
Eqn (\ref{eqn:im:prop}) then follows from Proposition~\ref{rank}. 
In particular, if $\alpha'=\alpha$,
we have $$\mathrm{im}\left(\HH(\mathbb{X}_\alpha, \mathbb{B}^{\alpha,\lambda + 3\delta}) \rightarrow 
\HH(\mathbb{X}_{\alpha}, \mathbb{B}^{\alpha,\lambda + \delta}) \right) \cong  \HH({\M}, {\M} - \bar{p}).$$
\end{proof}

Finally, we intersect each set with a sufficiently large ball 
$B_r(p)$ so that we only need to inspect within the neighborhood $B_r(p)$ of $p$. 
Specifically, denote $\X_{\alpha, r} = \X_\alpha \cap B_r(p)$ and 
$\X_{\alpha, r}^{\beta} = \X_{\alpha,r} \cap B^{\beta}(p)$. 
We obtain the next proposition by applying the Excision theorem (details in Appendix \ref{appendix:sec:offsets}). 
\begin{proposition}
Let all the parameters satisfy the same conditions as in Proposition \ref{im:prop}. 
%
Then, for $r > \lambda + 5\delta$, we have: 
$$\mathrm{im}\left(\HH(\X_{\alpha, r}, \X_{\alpha, r}^{\lambda+3\delta}) 
\rightarrow \HH(\X_{\alpha', r}, \X_{\alpha', r}^{\lambda'+\delta'})\right) \cong \HH({\M}, {\M} - \bar{p}). $$
In particular, 
$\mathrm{im}\left(\HH(\X_{\alpha, r}, \X_{\alpha, r}^{\lambda+3\delta}) \rightarrow 
\HH(\X_{\alpha, r}, \X_{\alpha, r}^{\lambda+\delta})\right) \cong \HH({\M}, {\M} - \bar{p}).$
\label{persistence-intersect-ball}
\end{proposition}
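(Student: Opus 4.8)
The plan is to deduce the statement from Proposition~\ref{im:prop} by a single application of the Excision theorem, which trades each offset pair $(\X_\alpha,\BB^{\alpha,\beta})$ for its ball-restricted counterpart $(\X_{\alpha,r},\X_{\alpha,r}^{\beta})$ while preserving every inclusion-induced map; the hypothesis $r>\lambda+5\delta$ is there only to leave enough slack for the excision to apply.

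First I would record the set-theoretic identities. Put $U=\X_\alpha\setminus B_r(p)=\X_\alpha\cap(\reals^d\setminus B_r(p))$, which is open in $\X_\alpha$ because $B_r(p)$ is closed; hence its closure in $\X_\alpha$ satisfies $\overline{U}\subseteq\X_\alpha\cap\{x:d(x,p)\ge r\}$. Unwinding the definitions, $\BB^{\alpha,\beta}=\X_\alpha-\BB_{\alpha,\beta}=\X_\alpha\cap B^{\beta}(p)$, so $\X_\alpha\setminus U=\X_\alpha\cap B_r(p)=\X_{\alpha,r}$ and $\BB^{\alpha,\beta}\setminus U=\X_\alpha\cap B^{\beta}(p)\cap B_r(p)=\X_{\alpha,r}\cap B^{\beta}(p)=\X_{\alpha,r}^{\beta}$. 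To excise $U$ from $(\X_\alpha,\BB^{\alpha,\beta})$ I must check $\overline{U}\subseteq\mathrm{int}_{\X_\alpha}\BB^{\alpha,\beta}$. But $\X_\alpha\cap\{x:d(x,p)>\beta\}$ is open in $\X_\alpha$ and contained in $\BB^{\alpha,\beta}$, hence lies in $\mathrm{int}_{\X_\alpha}\BB^{\alpha,\beta}$; and since $r$ is strictly larger than every radius $\beta$ that occurs for us, namely $\lambda+3\delta$, $\lambda+\delta$, and $\lambda'+\delta'$ (all below $\lambda+5\delta<r$), we get $\{x:d(x,p)\ge r\}\subseteq\{x:d(x,p)>\beta\}$, so the hypothesis of Excision holds. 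Therefore the pair inclusions induce isomorphisms $\HH(\X_{\alpha,r},\X_{\alpha,r}^{\lambda+3\delta})\stackrel{\cong}{\rightarrow}\HH(\X_\alpha,\BB^{\alpha,\lambda+3\delta})$, and, by the same argument with $\alpha'$ in place of $\alpha$ and $U'=\X_{\alpha'}\setminus B_r(p)$, $\HH(\X_{\alpha',r},\X_{\alpha',r}^{\lambda'+\delta'})\stackrel{\cong}{\rightarrow}\HH(\X_{\alpha'},\BB^{\alpha',\lambda'+\delta'})$.

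Next I would note that these two isomorphisms are compatible with the map of Proposition~\ref{im:prop}. The square of pair inclusions
\[
\xymatrix{
(\X_{\alpha, r},\X_{\alpha, r}^{\lambda+3\delta}) \ar@{^{(}->}[r] \ar@{^{(}->}[d] & (\X_{\alpha', r},\X_{\alpha', r}^{\lambda'+\delta'}) \ar@{^{(}->}[d] \\
(\X_\alpha,\BB^{\alpha,\lambda+3\delta}) \ar@{^{(}->}[r] & (\X_{\alpha'},\BB^{\alpha',\lambda'+\delta'})
}
\]
commutes: all four arrows are plain subspace inclusions, and the needed containments hold because $\X_\alpha\subseteq\X_{\alpha'}$ and $\lambda+3\delta\ge\lambda'+\delta'$ (the latter from $\lambda\ge\lambda'+2(\alpha'-\alpha)$ together with $\delta-\delta'=\alpha-\alpha'$). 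Passing to homology gives a commutative square whose verticals are the two Excision isomorphisms above and whose bottom row is exactly the map of Proposition~\ref{im:prop}; a one-line chase then shows the right vertical isomorphism restricts to an isomorphism from $\mathrm{im}\big(\HH(\X_{\alpha, r},\X_{\alpha, r}^{\lambda+3\delta})\to\HH(\X_{\alpha', r},\X_{\alpha', r}^{\lambda'+\delta'})\big)$ onto $\mathrm{im}\big(\HH(\X_\alpha,\BB^{\alpha,\lambda+3\delta})\to\HH(\X_{\alpha'},\BB^{\alpha',\lambda'+\delta'})\big)\cong\HH(\M,\M-\bar{p})$. The ``in particular'' assertion is the special case $\alpha'=\alpha$, $\lambda'=\lambda$, $\delta'=\delta$ (so $\lambda'+\delta'=\lambda+\delta$), fed by the corresponding ``in particular'' of Proposition~\ref{im:prop}.

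The one genuine bookkeeping hurdle is the Excision condition $\overline{U}\subseteq\mathrm{int}_{\X_\alpha}\BB^{\alpha,\beta}$: one must be careful with open versus closed balls and, above all, make sure a single $r$ works for the source pair, the target pair, and, if one prefers to transport the entire interleaving chain of Proposition~\ref{im:prop} into $B_r(p)$ rather than only its two $\BB$-pairs, for every intermediate pair as well. In that mode the governing case is the largest preimage set $\MM_{\alpha,\lambda+4\delta}$, which by~(\ref{dist-nest}) lies in $\mathring{B}_{\lambda+5\delta-2\eps}(p)$, whence $\X_\alpha\cap\{x:d(x,p)>\lambda+5\delta-2\eps\}\subseteq\mathrm{int}_{\X_\alpha}\MM^{\alpha,\lambda+4\delta}$ and $r>\lambda+5\delta$ again suffices; this is presumably the reason for stating that bound. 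Everything else is a routine invocation of Excision and functoriality together with a citation of Proposition~\ref{im:prop}.
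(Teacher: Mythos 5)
Your proof is correct and follows essentially the same route as the paper: both reduce to Proposition~\ref{im:prop} by applying the Excision theorem to remove the part of $\X_\alpha$ outside $B_r(p)$, and then conclude by the functoriality of the resulting isomorphisms (the Persistence Equivalence Theorem). The only stylistic difference is that you excise just the two endpoint pairs $(\X_\alpha,\BB^{\alpha,\lambda+3\delta})$ and $(\X_{\alpha'},\BB^{\alpha',\lambda'+\delta'})$ and finish with a single commutative square, whereas the paper transports the full interleaving chain of Proposition~\ref{im:prop} into $B_r(p)$, which---as you correctly surmise---is why the stated bound $r>\lambda+5\delta$ is what it is.
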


\noindent In fact, one can relax the parameters, and the image homology $\mathrm{im}\left(\HH(\X_{\alpha,r}, \X_{\alpha,r}^{\beta_2})\rightarrow \HH(\X_{\alpha',r}, \X_{\alpha',r}^{\beta_1})\right)$ captures (that is, is isomorphic to) the local homology $\HH(\M, \M-\bar{p})$ as long as $\beta_1 \ge \alpha' + 4\eps$, $\beta_2 \ge \beta_1 + \alpha + \alpha'+6 \eps$ and $r > \beta_2 + 2\alpha + 6\eps$. 

\section{Interleaving Nerves and Rips complexes}
\label{sec:Rips}

We now relate the relative homology of pairs as in Proposition~\ref{persistence-intersect-ball}
to the relative homology of pairs in Rips complexes. Our algorithm
works on these pairs of Rips complexes to derive the local homology
at a point on $\M$. As before, 
let $p\in P$ be a point from the sample.

\myparagraph{Nerves of spaces.} 
Consider the space $\X_{\alpha,r}= \X_\alpha \cap B_r(p)$. 
The connection of such spaces with simplicial complexes (Vietoris-Rips complex in particular) is made through the so-called nerve of a cover. 
In general, let $\cal U$ be a finite collection of sets. The \emph{nerve $\nerv {\cal U}$ of $\cal U$} is a simplicial complex whose simplices are given by all subsets of $\cal U$ whose members have a non-empty common intersection. That is, 
$$\nerv{\cal U} := \{ {\cal A} \subseteq {\cal U} \mid \cap {\cal A} \neq \emptyset \}. $$ 
The set $\cal U$ forms a \emph{good cover} of the union $\bigcup \cal U$ if the intersection of any subsets of $\cal U$ is either empty or contractible. 
The Nerve Lemma states that if $\cal U$ is a good cover of $\bigcup \cal U$, then $\nerv{\cal U}$ is homotopic to $\bigcup \cal U$, denoted by $\nerv{\cal U} \approx \bigcup \cal U$. 

Now consider the set of sets $\clX_{\alpha,r} = \{ B_\alpha(p_i) \cap B_r(p) \mid p_i \in P\}$; note that $\X_{\alpha, r} = \bigcup \clX_{\alpha,r}$. 
Since each set in $\clX_{\alpha,r}$ is convex, $\clX_{\alpha,r}$ forms a good cover of $\X_{\alpha,r}$ and thus $\nerv \clX_{\alpha, r} \approx \X_{\alpha,r}$ by the Nerve Lemma. 
Furthermore, it follows from Lemma A.5 of \cite{bei:2012} that for $r > \beta + 2\alpha$, the set  $\clX_{\alpha,r}^\beta = \{ B_\alpha(p_i) \cap B_r(p) \cap B^\beta(p) \}_{i \in [1, n]}$ also form a good cover of $\bigcup \clX_{\alpha,r}^\beta (= \X_{\alpha,r}^\beta)$; see Appendix \ref{appendix:goodcover} for details. Thus, we have $\nerv\clX_{\alpha,r}^\beta
 \approx \X_{\alpha,r}^\beta$. 
We can now convert the relative homology between $\X_{\alpha,r}$ and $\X_{\alpha,r}^\beta$ to the homology of their nerves. In particular, we have the following result. The proof is in Appendix \ref{appendix:lem:relativenerves}, and it relies heavily on the proof of Lemma 3.4 of \cite{Chazal:2008} which gives a crucial commutative result for the space and its nerve. 

\begin{lemma}
Let 
all the parameters satisfy the same conditions as in Proposition \ref{im:prop}. 
Then, for $r > \lambda + 5\delta$: 
\begin{center}
$\mathrm{im}\left(\HH(\nerv\clX_{\alpha, r}, \nerv \clX_{\alpha, r}^{\lambda+3\delta}) \rightarrow 
\HH(\nerv \clX_{\alpha', r}, \nerv \clX_{\alpha', r}^{\lambda'+\delta'}) \right) \cong  \HH({\M}, {\M} - \bar{p}). $
\end{center}
\label{lem:relativenerves}
\end{lemma}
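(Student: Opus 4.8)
The plan is to transport the isomorphism of Proposition~\ref{persistence-intersect-ball} from the spaces $(\X_{\alpha,r},\X_{\alpha,r}^{\beta})$ to their nerves $(\nerv\clX_{\alpha,r},\nerv\clX_{\alpha,r}^{\beta})$, using the Nerve Lemma together with the naturality/commutativity statement from Lemma~3.4 of \cite{Chazal:2008}. First I would record the good-cover facts already assembled in the excerpt: $\clX_{\alpha,r}$ is a good cover of $\X_{\alpha,r}$ (convex sets), and for $r>\beta+2\alpha$ the family $\clX_{\alpha,r}^{\beta}$ is a good cover of $\X_{\alpha,r}^{\beta}$ (Lemma~A.5 of \cite{bei:2012}); moreover $\clX_{\alpha,r}^{\beta}\subseteq\clX_{\alpha,r}$ as a subcollection in the obvious elementwise sense, and likewise when we pass from radius $\alpha$ to $\alpha'$ each ball only grows, so $\clX_{\alpha,r}\subseteq\clX_{\alpha',r}$ and $\clX_{\alpha,r}^{\beta_2}\hookrightarrow\clX_{\alpha',r}^{\beta_1}$ at the level of the index sets whenever the corresponding space inclusions hold (here $\beta_2=\lambda+3\delta$, $\beta_1=\lambda'+\delta'$, and the containments needed are exactly those already used to derive sequence~(\ref{eq2half}) and Proposition~\ref{persistence-intersect-ball}, valid since $r>\lambda+5\delta\ge\beta_2+2\alpha$ and $r>\beta_1+2\alpha'$).

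Next I would invoke the relative Nerve Lemma with naturality. For a good cover $\cal U$ of $X$ whose subcollection ${\cal V}\subseteq{\cal U}$ is a good cover of a subspace $A\subseteq X$, one has a homotopy equivalence of pairs $(\nerv{\cal U},\nerv{\cal V})\simeq(X,A)$, hence an isomorphism $\HH(\nerv{\cal U},\nerv{\cal V})\cong\HH(X,A)$; and this isomorphism is natural with respect to inclusions of such cover pairs, which is precisely the content that the proof of Lemma~3.4 in \cite{Chazal:2008} supplies (a commutative ladder relating the homology of a space, of its nerve, and the maps induced by refinement/inclusion). Applying this to the two cover pairs $(\clX_{\alpha,r},\clX_{\alpha,r}^{\lambda+3\delta})$ and $(\clX_{\alpha',r},\clX_{\alpha',r}^{\lambda'+\delta'})$ gives a commutative square
\[
\xymatrix{
\HH(\X_{\alpha,r},\X_{\alpha,r}^{\lambda+3\delta}) \ar[r] \ar[d]^{\cong}
& \HH(\X_{\alpha',r},\X_{\alpha',r}^{\lambda'+\delta'}) \ar[d]^{\cong}\\
\HH(\nerv\clX_{\alpha,r},\nerv\clX_{\alpha,r}^{\lambda+3\delta}) \ar[r]
& \HH(\nerv\clX_{\alpha',r},\nerv\clX_{\alpha',r}^{\lambda'+\delta'})
}
\]
in which the vertical maps are isomorphisms. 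A commutative square with isomorphic verticals has top and bottom maps with isomorphic images, so $\mathrm{im}$ of the bottom horizontal map is isomorphic to $\mathrm{im}$ of the top one, which by Proposition~\ref{persistence-intersect-ball} is $\HH(\M,\M-\bar p)$. This yields the displayed conclusion; specializing $\alpha'=\alpha$, $\lambda'=\lambda$ gives the "in particular" form exactly as in Proposition~\ref{persistence-intersect-ball}.

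The main obstacle is the naturality of the nerve equivalence at the level of \emph{pairs} and with respect to the two different covers at radii $\alpha$ and $\alpha'$ simultaneously. The ordinary Nerve Lemma only gives an abstract homotopy equivalence; to get the commutative square one needs the version that produces a map $\nerv{\cal U}\to X$ (or a zig-zag through a common blow-up/Mayer–Vietoris object) that is functorial under cover refinement and restricts correctly to the subcover. This is exactly why the proof must lean on the explicit construction in Lemma~3.4 of \cite{Chazal:2008} rather than the bare Nerve Lemma: one reproduces their blow-up complex (or barycentric-type) argument for each of $X_{\alpha,r}$, $X_{\alpha,r}^{\beta}$, $X_{\alpha',r}$, $X_{\alpha',r}^{\beta'}$, checks that the four blow-up constructions are compatible with the two inclusions (horizontal: radius enlargement; vertical within each pair: removal of $\mathring B_\beta(p)$), and then reads off commutativity. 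The bookkeeping of which good-cover hypotheses are needed at each of the four corners — and verifying $r>\lambda+5\delta$ indeed implies all of $r>\beta+2\alpha$ type inequalities — is routine given the parameter bounds already established, so I would relegate it to the appendix as the statement promises.
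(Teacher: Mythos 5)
Your proof is essentially the same as the paper's: both transport the image-isomorphism of Proposition~\ref{persistence-intersect-ball} to nerves by upgrading the Nerve Lemma to a relative, natural version built from the blow-up construction in the proof of Lemma~3.4 of \cite{Chazal:2008}, and both correctly identify that this naturality (rather than the bare Nerve Lemma) is the real content. The one organizational difference is how the commutativity is established for the inclusion $(\X_{\alpha,r},\X_{\alpha,r}^{\lambda+3\delta})\hookrightarrow(\X_{\alpha',r},\X_{\alpha',r}^{\lambda'+\delta'})$, which enlarges \emph{both} members of the pair at once: you assert the $2\times 2$ commutative square directly by claiming compatibility of all four blow-ups, whereas the paper (Proposition~\ref{nerve-relative} in the appendix) proves naturality only for the two elementary cases where one member of the pair is held fixed --- triples $B_1\subset A_1\subset A_2$ giving $(A_1,B_1)\hookrightarrow(A_2,B_1)$, and $B_1\subset B_2\subset A_2$ giving $(A_2,B_1)\hookrightarrow(A_2,B_2)$ --- and then composes the two resulting squares through the intermediate pair $\HH(A_2,B_1)$. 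The factored version is a touch cleaner to verify because each elementary square only changes one cover; if you instead prove the $2\times 2$ square directly you should make explicit that the four blow-up maps $p,p',p'',\ldots$ (and the subsequent maps through barycentric subdivisions to the nerves) are all obtained by restriction of a single map from the largest blow-up, so that commutativity of the full square is immediate rather than requiring a separate diagram chase. With that detail spelled out, your plan works and gives the same result.
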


\noindent {\bf Relating nerves and Rips complexes.~}
First, we recall that for $\alpha\geq 0$, the
\emph{\v{C}ech complex} $C^{\alpha}(Q)$ of a point set $Q$ 
is the nerve of the cover $\{B_\alpha(q_i) : q_i\in Q\}$ of 
$\cup B_\alpha(q_i)=\X_\alpha$.
The \emph{Vietoris-Rips} (Rips in short) complex $\Rips^{\alpha}(Q)$ is the 
maximal complex induced by the edge set $\{(p_j, p_k) \mid d(p_j, p_k) \le \alpha \}$. 
It is well known that for any point set $Q$, the following holds: 
$$
C^{\alpha}(Q) \subset \Rips^{2\alpha}(Q) \subset C^{2\alpha}(Q).
$$
\parpic[r]{\includegraphics[height=4cm]{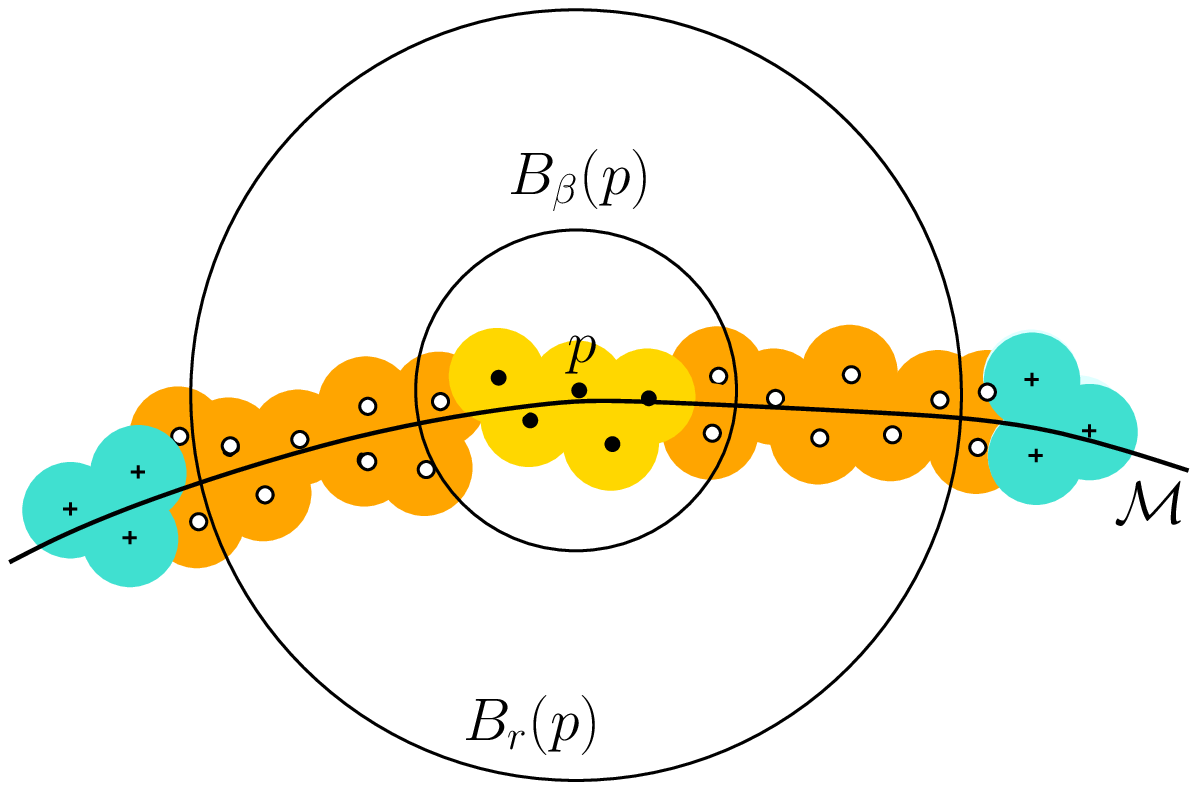}}
 Define $P_{\alpha, r} = \{p_i \in P \mid B_\alpha(p_i) \cap B_r(p) \neq \emptyset  \}$. 
Obviously, $P_{\alpha,r}$ forms the vertex set 
for the nerve $\nerv \clX_{\alpha, r}$. 
Similarly, let $P_{\alpha, r}^{\beta} = \{p_i\in P_{\alpha,r} \mid
B_\alpha(p_i) \cap B^{\beta}(p) \neq \emptyset\} $ denote the
vertex set of $\nerv\clX_{\alpha, r}^{\beta}$. 
See the figure on right for an example, where the union of solid and empty dots forms the set of points $P_{\alpha,r}$, while $P_{\alpha,r}^\beta$ consists the set of empty dots. 
Note that from the definition, it follows 
that $P_{\alpha, r}^{\beta} \subset P_{\alpha, r}$ and 
$P_{\alpha, r}^{\beta} \subset P_{\alpha, r}^{\beta'}$ for $\beta' < \beta$.
Furthermore, as the offset $\X_\alpha$ grows, it is immediate that 
$P_{\alpha, r} \subset P_{\alpha', r}$
and $P_{\alpha, r}^{\beta} \subset P_{\alpha', r}^{\beta}$ 
for $\alpha < \alpha'$. 

Each element in the good cover 
$\clX_{\alpha,r}$ or $\clX_{\alpha,r}^{\beta}$ is in 
the form of $B_\alpha(p_i) \cap B_r(p)$ or $B_\alpha(p_i) \cap B_r(p) \cap B^{\beta}(p)$.
Since the \v{C}ech complex of a set is the nerve of the 
set of balls $B_\alpha(p_i)$,
it follows easily that 
\begin{eqnarray}
\nerv\clX_{\alpha, r} \subset C^{\alpha}(P_{\alpha, r}) \subset \Rips^{2\alpha}(P_{\alpha, r})
\mbox{ and } 
\nerv\clX_{\alpha, r}^{\beta} \subset  C^{\alpha}(P_{\alpha, r}^{\beta}) \subset \Rips^{2\alpha}(P_{\alpha, r}^{\beta}).
\label{eqn:NerveinRips}
\end{eqnarray}

\begin{claim}
(i) $\Rips^{2\alpha}(P_{\alpha, r}) \subset \nerv\clX_{3\alpha, r}$, and 
(ii) $\Rips^{2\alpha}(P_{\alpha, r}^{\beta}) \subset \nerv\clX_{3\alpha, r}^{\beta}$.
\label{claim:RipsinNerve}
\end{claim}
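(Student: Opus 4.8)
The plan is to prove both inclusions by the same argument, working with part (i) and then observing that (ii) follows by intersecting everything with the extra set $B^\beta(p)$. Take a simplex $\sigma = \{p_{i_0}, \dots, p_{i_k}\}$ of $\Rips^{2\alpha}(P_{\alpha,r})$. By definition of the Rips complex this means each pairwise distance satisfies $d(p_{i_j}, p_{i_\ell}) \le 2\alpha$, and by definition of $P_{\alpha,r}$ each vertex $p_{i_j}$ satisfies $B_\alpha(p_{i_j}) \cap B_r(p) \neq \emptyset$. To show $\sigma \in \nerv\clX_{3\alpha,r}$, I must exhibit a common point of the sets $B_{3\alpha}(p_{i_0}) \cap B_r(p), \dots, B_{3\alpha}(p_{i_k}) \cap B_r(p)$; equivalently, a point lying in $B_r(p)$ and within distance $3\alpha$ of every $p_{i_j}$.

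The key step is to pick a good candidate point. I would let $q \in B_\alpha(p_{i_0}) \cap B_r(p)$ be a witness that $p_{i_0} \in P_{\alpha,r}$, so $q \in B_r(p)$ already. For any other vertex $p_{i_j}$ of $\sigma$, the triangle inequality gives $d(q, p_{i_j}) \le d(q, p_{i_0}) + d(p_{i_0}, p_{i_j}) \le \alpha + 2\alpha = 3\alpha$, using that $\sigma$ is a Rips simplex at scale $2\alpha$. Hence $q \in B_{3\alpha}(p_{i_j}) \cap B_r(p)$ for every $j$ (including $j=0$, trivially), so the common intersection is nonempty and $\sigma$ is a simplex of $\nerv\clX_{3\alpha,r}$. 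This proves (i). For (ii), the vertices of $\sigma$ now additionally lie in $P_{\alpha,r}^\beta$, so I can instead choose the witness point $q \in B_\alpha(p_{i_0}) \cap B_r(p) \cap B^\beta(p)$; the same triangle-inequality computation shows $q \in B_{3\alpha}(p_{i_j}) \cap B_r(p) \cap B^\beta(p)$ for all $j$, giving $\sigma \in \nerv\clX_{3\alpha,r}^\beta$.

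I do not expect any serious obstacle here — this is essentially the standard $\Rips \subset \Cech$ interleaving argument adapted to the clipped cover sets, and the only subtlety is bookkeeping which witness point to use so that the extra constraints ($B_r(p)$ in part (i), and also $B^\beta(p)$ in part (ii)) are automatically inherited. The mild point to be careful about is that $B^\beta(p) = \reals^d \setminus \mathring{B}_\beta(p)$ is not convex, so one cannot invoke convexity of the cover elements; but here that is irrelevant since we are only exhibiting a single common point, not verifying contractibility. One should also note that $P_{\alpha,r} \subseteq P_{3\alpha,r}$ (and likewise for the superscripted versions), so the vertex sets line up and the claimed inclusions of complexes make sense.
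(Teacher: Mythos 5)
Your argument is correct and essentially identical to the paper's: pick a witness point in $B_\alpha(p_{i_0}) \cap B_r(p)$ (guaranteed nonempty since $p_{i_0} \in P_{\alpha,r}$) and apply the triangle inequality to show it lies within $3\alpha$ of every vertex of the Rips simplex, hence in the common intersection of the $3\alpha$-balls clipped by $B_r(p)$. The paper dispatches part (ii) with the phrase ``similar argument,'' whereas you spell it out — noting, correctly, that the relevant witness point exists because $P_{\alpha,r}^\beta$ is precisely the vertex set of $\nerv\clX_{\alpha,r}^\beta$ — so no difference in substance.
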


\begin{proof}
To prove (i), consider an arbitrary simplex 
$\sigma=[p_{0}p_{1}\ldots p_{\ell}] \in \Rips^{2\alpha}(P_{\alpha, r})$. 
By definition of Rips complex, $d(p_i, p_j) \le 2\alpha$ for $0\leq i, j \leq \ell$. 
Then, for any point $x \in B_\alpha(p_0) \cap B_r(p)$, we have that 
$d(x,p_{i}) < d(x,p_0) + d(p_{0},p_{i}) < 3\alpha$ implying
$x\in \cap_{i=0}^{\ell} B_{3\alpha}(p_i)$ and $(\cap_{i=0}^{\ell} B_{3\alpha}(p_i)) \cap B_r(p) \neq \emptyset$. In other words, $\sigma \in \nerv\clX_{3\alpha, r}$, thus proving Claim (i). 
Claim (ii) can be shown by a similar argument. 
\end{proof}
%

Set $\eta_1 = \lambda + 9\alpha + 3\eps$ and $\eta_2 \ge \eta_1 + 12\alpha+6\eps$ for any $\lambda > \eps$.  
Combining Eqn (\ref{eqn:NerveinRips}) and Claim \ref{claim:RipsinNerve}, 
we get three nested sequences
 $$
 \nerv\clX_{\alpha, r} 
 \subset \Rips^{2\alpha}(P_{\alpha, r})
 \subset \nerv\clX_{3\alpha, r} 
 \subset \Rips^{6\alpha}(P_{3\alpha, r})
 \subset \nerv\clX_{9\alpha, r} 
 $$ 
  $$
 \nerv\clX_{\alpha, r}^{\eta_1} 
  \subset \Rips^{2\alpha}(P_{\alpha, r}^{\eta_1})
 \subset \nerv\clX_{3\alpha, r}^{\eta_1} 
  \subset \Rips^{6\alpha}(P_{3\alpha, r}^{\eta_1})
 \subset \nerv\clX_{9\alpha, r}^{\eta_1} 
 $$
  $$
 \nerv\clX_{\alpha, r}^{\eta_2} 
 \subset \Rips^{2\alpha}(P_{\alpha, r}^{\eta_2})
 \subset \nerv\clX_{3\alpha, r}^{\eta_2} 
 \subset \Rips^{6\alpha}(P_{3\alpha, r}^{\eta_2})
 \subset \nerv\clX_{9\alpha, r}^{\eta_2} 
 $$ 
These give rise to the following sequence of pairs 
\[
(K_\alpha, K_\alpha^{\eta_2}) \hookrightarrow 
(R_{\alpha}, R_{\alpha}^{\eta_2}) \hookrightarrow 
(K_{3\alpha}, K_{3\alpha}^{\eta_2}) \hookrightarrow 
(K_{3\alpha}, K_{3\alpha}^{\eta_1}) \hookrightarrow 
(R_{3\alpha}, R_{3\alpha}^{\eta_1}) \hookrightarrow 
(K_{9\alpha}, K_{9\alpha}^{\eta_1})
\]
where $K_\alpha = \nerv\clX_{\alpha, r}$, 
$K_\alpha^{\beta}=\nerv\clX_{\alpha, r}^{\beta}$, 
$R_{\alpha}=\Rips^{2\alpha}(P_{\alpha, r})$ and $R_{\alpha}^{\beta}=\Rips^{2\alpha}(P_{\alpha, r}^{\beta}) $.
From Proposition~\ref{persistence-intersect-ball} and 
Lemma \ref{lem:relativenerves}, it is immediate that 
$\mathrm{im}({i_\alpha}_{*}) \cong  \mathrm{im}({i_{3\alpha}}_{*}) 
\cong \HH({\M}, {\M} - \bar{p})$ where
${i_\alpha}_{*}$ and ${i_{3\alpha}}_{*}$ are induced from 
$i_\alpha : (K_\alpha, K_\alpha^{\eta_2}) \hookrightarrow (K_{9\alpha}, K_{9\alpha}^{\eta_1})$ 
and 
$i_{3\alpha} : (K_{3\alpha}, K_{3\alpha}^{\eta_2}) \hookrightarrow (K_{3\alpha}, K_{3\alpha}^{\eta_1})$ .
It follows from Proposition~\ref{rank}
that $\mathrm{im}({j_{\alpha}}_{*}) 
\cong \HH({\M}, {\M} - \bar{p})$ where ${j_{\alpha}}_{*}$ is induced from 
$j_{\alpha} : (R_{\alpha}, R_{\alpha}^{\eta_2}) 
\hookrightarrow (R_{3\alpha}, R_{3\alpha}^{\eta_1})$.
To apply Proposition~\ref{persistence-intersect-ball}, we need the condition required by 
Eq. \ref{eq10}, which is $\eta_2 + \alpha + 3\eps < \rho(\M)$ here. 
This condition
together with $\eta_2 \ge \eta_1 + 12\alpha+6\eps$ require 
that $\alpha < \frac{\rho(\M)-13\eps}{22}$.
We also need $\theta_1 \leq \alpha$. 
Both conditions are satisfied when $0<\eps<\frac{\rho(\M)}{58}$.
Thus, we have our main result:
\begin{theorem}
Let $0<\eps < \frac{\rho(\M)}{58}$ 
and  $\theta_1 \leq \alpha \leq \frac{\rho(\M)-13\eps}{22}$. 
Furthermore, let $\eta_1$ and
$\eta_2$ be such that  $\eps<\eta_1, \eta_2<\rho(\M)$,
$\eta_1 \ge 9\alpha + 4\eps$, and $\eta_2 \ge \eta_1 + 12\alpha+6\eps$. 
The inclusion 
$$j_{\alpha} : (\Rips^{2\alpha}(P_{\alpha, r}), \Rips^{2\alpha}(P_{\alpha, r}^{\eta_2})) 
\hookrightarrow 
(\Rips^{6\alpha}(P_{3\alpha, r}), \Rips^{6\alpha}(P_{3\alpha, r}^{\eta_1}))$$
satisfies $\mathrm{im}({j_{\alpha}}_{*}) 
\cong \HH({\M}, {\M} - \bar{p})$ for any $r \ge \eta_1+\eta_2$. 
\label{thm:mainthm}
\end{theorem}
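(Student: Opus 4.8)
The plan is to promote the two isomorphisms already obtained for \emph{nerve} pairs in Lemma~\ref{lem:relativenerves} to the analogous statement for the \emph{Rips} pairs appearing in the theorem, using the standard Nerve--\v{C}ech--Rips interleaving together with the rank lemma (Proposition~\ref{rank}). First I would combine the containments in~(\ref{eqn:NerveinRips}) with Claim~\ref{claim:RipsinNerve} to produce the chain $\nerv\clX_{\alpha, r}\subset\Rips^{2\alpha}(P_{\alpha, r})\subset\nerv\clX_{3\alpha, r}\subset\Rips^{6\alpha}(P_{3\alpha, r})\subset\nerv\clX_{9\alpha, r}$ of complexes at offsets $\alpha,3\alpha,9\alpha$, together with its two analogues in which the open ball $\mathring B_{\eta_1}(p)$, resp.\ $\mathring B_{\eta_2}(p)$, is removed from every term. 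Intersecting the three chains gives the sequence of inclusions of topological pairs
\[
(K_\alpha,K_\alpha^{\eta_2})\hookrightarrow (R_\alpha,R_\alpha^{\eta_2})\hookrightarrow (K_{3\alpha},K_{3\alpha}^{\eta_2})\hookrightarrow (K_{3\alpha},K_{3\alpha}^{\eta_1})\hookrightarrow (R_{3\alpha},R_{3\alpha}^{\eta_1})\hookrightarrow (K_{9\alpha},K_{9\alpha}^{\eta_1}),
\]
with $K,R$ as in the notation preceding the theorem. Applying relative homology turns this into a six--term sequence $A\to B\to C\to D\to E\to F$ of $\Z_2$--vector spaces and inclusion--induced maps, in which $B\to E$ is precisely ${j_\alpha}_*$.

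Next I would pin down the images of the two ``anchor'' maps, the full composite $A\to F$ and the middle map $C\to D$. Applying Lemma~\ref{lem:relativenerves} with offsets $(\alpha,\alpha')=(\alpha,9\alpha)$ and cut-offs chosen so that $\lambda+3\delta=\eta_2$ and $\lambda'+\delta'=\eta_1$ (hence $\delta=\alpha+3\eps$, $\delta'=9\alpha+3\eps$) gives $\mathrm{im}(A\to F)\cong\HH(\M,\M-\bar p)$; applying it with $(\alpha,\alpha')=(3\alpha,3\alpha)$ and again $\lambda+3\delta=\eta_2$, $\lambda'+\delta'=\eta_1$ (now $\delta=\delta'=3\alpha+3\eps$) gives $\mathrm{im}(C\to D)\cong\HH(\M,\M-\bar p)$. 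In each case I would check the hypotheses of Proposition~\ref{im:prop}: that $\eps<\lambda'$, the monotonicity $\lambda\ge\lambda'+2(\alpha'-\alpha)$ (guaranteed by $\eta_2-\eta_1\ge 12\alpha+6\eps$), the ordering $\theta_1\le\alpha\le\alpha'\le\theta_2'$, and a lower bound on $r$ --- all implied by the standing hypotheses, with $r\ge\eta_1+\eta_2$ dominating both the ``$r>\lambda+5\delta$'' requirement of Proposition~\ref{persistence-intersect-ball} and the good--cover requirement ``$r>\beta+2\alpha$'' used inside Lemma~\ref{lem:relativenerves}.

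The conclusion then follows from a single application of Proposition~\ref{rank}: since $\rank(A\to F)=\rank(C\to D)=\rank\HH(\M,\M-\bar p)$ in every dimension, it yields $\rank(B\to E)=\rank\HH(\M,\M-\bar p)$ degreewise; because all the spaces are finite--dimensional over $\Z_2$ and $\HH(\M,\M-\bar p)\cong\redHH(\sphere^\inD)$ has known finite rank in each dimension, matching ranks degreewise forces $\mathrm{im}({j_\alpha}_*)\cong\HH(\M,\M-\bar p)$, as claimed.

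The main obstacle here is the parameter bookkeeping, not anything conceptual. One must verify that both invocations of Lemma~\ref{lem:relativenerves} (the one with $\alpha'=9\alpha$ and the one with $\alpha'=\alpha=3\alpha$) land in the admissible range, the tightest constraint arising from the requirement --- buried in the proof of Proposition~\ref{im:prop} via Proposition~\ref{ball-ball-lem} --- that $B_{\lambda+4\delta}(p)\cap\M$ be a topological ball, i.e.\ $\lambda+4\delta<\rho(\M)$; after the substitutions above this becomes $\eta_2+\alpha+3\eps<\rho(\M)$, and combined with $\eta_2\ge\eta_1+12\alpha+6\eps$ and $\eta_1\ge 9\alpha+4\eps$ it forces $22\alpha+13\eps<\rho(\M)$, i.e.\ $\alpha\le(\rho(\M)-13\eps)/22$. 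Finally, since $\theta_1\theta_2=2\eps\rho(\M)$ with $\theta_2<\rho(\M)$ (so $\theta_1$ is only slightly larger than $2\eps$), the interval $[\theta_1,(\rho(\M)-13\eps)/22]$ from which $\alpha$ is drawn is nonempty precisely when $\eps<\rho(\M)/58$, the stated hypothesis. Everything else is a routine chase through inclusions and commutative diagrams.
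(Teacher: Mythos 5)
Your proposal is correct and reproduces the paper's own argument essentially verbatim: you form the same six‑term chain $(K_\alpha,K_\alpha^{\eta_2})\hookrightarrow\cdots\hookrightarrow(K_{9\alpha},K_{9\alpha}^{\eta_1})$, identify the two anchor inclusions $i_\alpha$ (offsets $\alpha,9\alpha$) and $i_{3\alpha}$ (offsets $3\alpha,3\alpha$), apply Lemma~\ref{lem:relativenerves} (which rests on Propositions~\ref{im:prop} and \ref{persistence-intersect-ball}) to each, and finish with Proposition~\ref{rank}; the parameter bookkeeping you describe, including the derivation of the $\alpha$‑bound from $\eta_2+\alpha+3\eps<\rho(\M)$ together with $\eta_2\ge\eta_1+12\alpha+6\eps$ and the $\eps<\rho(\M)/58$ threshold from the nonemptiness of $[\theta_1,(\rho(\M)-13\eps)/22]$, mirrors the paper's own accounting.
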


\paragraph{Algorithm.} 
Given a sample point $p = p_i$, our algorithm first constructs the 
necessary Rips complexes as specified in Theorem \ref{thm:mainthm} for some parameters $\alpha < \eta_1 < \eta_2 < r$.
For simplicity, rewrite $j_\alpha: (A_1, B_1) \hookrightarrow (A_2,B_2)$
where $B_1 \subset A_1 \subset A_2$ and $B_1 \subset B_2 \subset A_2$.
After obtaining the necessary Rips complexes, 
one possible method for computing $\mathrm{im}(j_{\alpha*})$ would be to
cone the subcomplexes $B_1$ and $B_2$ with a dummy vertex 
$w$ to obtain an inclusion 
$\iota : A_1 \cup (w*B_1) \hookrightarrow A_2 \cup (w*B_2)$
where $w*B_j = B_j \cup \{w*\sigma | \sigma \in B_j\}$ 
is the cone on $B_j$ $(j=1,2)$.
It is easy to see that $\mathrm{im}(j_{\alpha*}) \cong \mathrm{im}(\iota_*)$. 
Then, the standard persistent homology algorithm can be applied.
However, the cone operations may add many unnecessary simplices
slowing down the computation. 
Instead, we order the simplices in $A_2$ properly to 
build a filtration so that  
the rank of $\mathrm{im}({j_\alpha}_*)$
can be read off from the reduced boundary matrix built from the filtration. 
The details of this algorithm can be found in 
Appendix \ref{appendix:matrix-algorithm}.

\section{Experimental results}
\label{exp}
We present some preliminary experimental results on several synthesized and 
real data. Recall that our method only needs points 
in the neighborhood of a base point. 
While the theoretical result guarantees
the correct detection of dimension for correct choices of parameters, 
in practice, the choice of the base point plays an important role.
If the points sample only a 
patch of a manifold, then the local homology of points near the boundary of 
that patch will be trivial, which results in 
plenty of base points with trivial local homology.
Furthermore, noise and 
inadequate density make the dimension estimation difficult.
To overcome these hurdles, we explore some practical strategies.
\begin{figure}[h!]
\begin{center}
\begin{tabular}{ccc}
\includegraphics[scale=0.8]{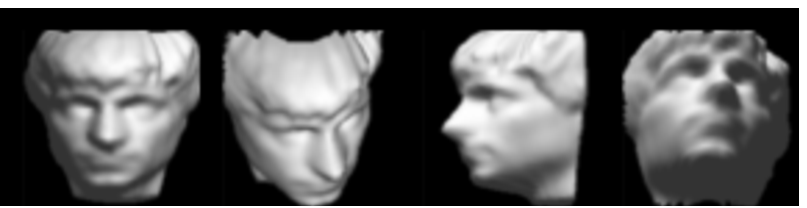} &
\includegraphics[scale=0.8]{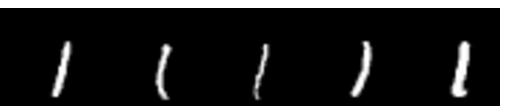} &
\includegraphics[scale=0.8]{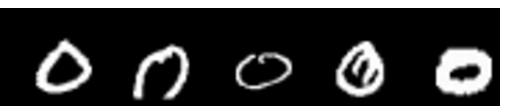}\\ 
(a) \textbf{Head} &
(b) \textbf{D1} &
(c) \textbf{D0} 
\end{tabular}
\end{center}
\caption{Image data : rotating head (\textbf{Head}), handwritten ones (\textbf{D1}) and zeros (\textbf{D0}).}
\label{fig:real-data}
\end{figure}

For the synthesized data, which is uniform and dense, 
we take a sparse and uniform subsample from the input
as a set of base points. At each base point, the local 
homology is estimated by our program. We discard the result in which 
the computed homology is trivial or does not coincide with 
$\redHH(\sphere^n)$ for any $n$, as these are obviously not correct. 
The remaining base points return the homology 
of an $n$-sphere, that is $\rank(\HH_i)=1$ \emph{iff} $i=n$ for
some $n$. These are called \emph{valid base points}. 
These points are grouped according to which $n$-sphere homology
they have, and we return the dimension $n$ of 
the group with most members as
the detected dimension. 
\begin{table}[ht!]
\centering
{\small
\begin{tabular}{|c|c|c|c|c|c|c|c|}
\hline
& {\sc Sample Points} & {\sc Avg. Neighb.} & {\sc Not} $n$-{\sc sphere} & 
{\sc Trivial} & \multicolumn{2}{|c|}{$n$-{\sc sphere}} & {\sc Correct Ratio} \\
\hline
$\sphere^3$ &  $4096$ &  $19$ & $0/60$ & $38/60$ &  n=3 & $22/60$ & $100\% (22 / 22)$ \\
\hline
$\sphere^4$ & $4097$ &  $34$ & $0/46$  & $40/46$ & n=4 & $6/46$  & $100\% (6 / 6)$ \\
\hline
$\sphere^5$ & $32769$ &  $52$ & $0/74$ & $69/74$ & n=5 & $5/74$ & $100\% (5 / 5)$\\
\hline
$\sphere^{6}$ & $262145$ &  $74$ &$0/220$ & $213/220$ & n=6 & $7/220$ & $100\% (7 / 7)$\\
\hline
\textbf{Shift} & $2240$ &  $37$ & $0/67$ & $15/67$ &  n=2 & $52/67$ & $100\% (52 / 52)$ \\
\hline
\multirow{2}{*}{${ M}^{3}$} & 
\multirow{2}{*}{$2796$ } &
\multirow{2}{*}{$316$ } &
\multirow{2}{*}{$0/54$} &
\multirow{2}{*}{$40/54$} &
n=2 & $1/54$  & \multirow{2}{*}{$92.8\% (13 / 14)$}  \\
& & & & & n=3& $13/54$ &\\
\hline
\end{tabular}
}
\caption{Results for synthetic data}
\label{table:synthetic}
\end{table}

For the real data, which mostly comes from a small part of a manifold,
we use a different strategy because these data are
non-uniform and contain high noise and outliers. 
Three data \textbf{Head}, \textbf{D1}, and \textbf{D0} (some
samples shown in Figure~\ref{fig:real-data}) are considered. 
We first identify some sample points called centers 
away from the boundary and undersampled regions
using a graph based method described in the Appendix~\ref{appendix:centers}.
Then, we estimate the local homology at these points. Table~\ref{real-tab} 
in the Appendix~\ref{appendix:centers} provides the results on estimated
dimensions.

Our synthetic data consists of points sampled from spherical caps of $n$-spheres $\sphere^n$ for $n = 3, 4, 5, 6$; 
a $3$-manifold $M^3\subset \reals^{50}$ with boundary (computed from a parametric equation); 
and a $2$D translation of a smaller image 
within a black image with resolution 
$60 \times 84$(\textbf{Shift}) (see \cite{CC09}). 
The input for each $\sphere^n$ is a uniform $0.0125$-sample of
a spherical cap (thus is a manifold \emph{with} boundary) with no noise. 
The \textbf{Shift} data is also noiseless.
%
The sample points of $M^3$ is noisy with a 
$0.05$ unit Hausdorff noise. 
The results on the synthetic data are summarized in Table~\ref{table:synthetic}.
{\sc Avg. Neighb.} column gives the average number of points in the local neighborhood of each base point used to estimate local homology. 
{\sc Correct Ratio} column shows the ratio of correct dimension 
detection over all valid base points. 
Among all valid base points, our algorithm 
produces no false positives for all the $\sphere^n$ data sets.
For the noisy sample of $M^3$, we 
have only one false positive out of 14 valid points. 
The high number of points that return trivial homology (5th column) is mainly due to points near the boundary of the manifold. 
For the \textbf{Shift} data, our method detects its dimension $2$
with high confidence. 
The \textbf{Shift} was used and compared in \cite{CC09}.

\begin{wrapfigure}{l}{3in}
\begin{tabular}{|l|c|c|c|c|}
\hline
& \textbf{Shift} & \textbf{Head} & \textbf{D1} & \textbf{D0}\\
\hline
Ours & $2$ & $3$ & $4$ & $3$\\
SLIVER &  $3$ & $4$ & $3$ & $2$\\
MLE & $4.27$ & $4.31$ & $11.47$ & $14.86$ \\
MA & $3.35$ & $4.47$ & $10.77$ & $13.93$ \\
PN & $3.62$ & $3.98$ & $6.22$ & $8.86$\\
LPCA & $3$ & $3$ & $5$ & $8.86$\\
ISOMAP & $2$  & $3$ & $5$ &$[3,6]$\\
\hline
\end{tabular}
\label{table:comparison}
\end{wrapfigure}
In the table on left, we show comparisons with other methods.
Although \textbf{Shift} is uniform and noise free, 
only ISOMAP and ours get the correct dimension.
The real data contains $698$ images of a rotating 
head (\textbf{Head}, Fig. \ref{fig:real-data}(a)), 
$6742$ images of handwritten ones 
(\textbf{D1}, Fig. \ref{fig:real-data}(b)) and 
$5923$ images of handwritten zeros 
(\textbf{D0}, Fig. \ref{fig:real-data}(c)) from MNIST database. 
These three data were also explored and compared in \cite{CC09}, where  
Cheng and Chiu \cite{CC09} compared their dimension detection 
method via sliver (SLIVER) with  
other methods: the maximum likelihood estimation  (MLE) \cite{Elizaveta2005},
the manifold adaptive method (MA) \cite{FSA07},
the packing number method (PN) \cite{K02}, 
the local PCA (LPCA) \cite{CWW05},
and the isomap method (ISOMAP) \cite{Tenenbaum:2000}.
Since we test our method on the same data, 
we include the comparison results on these three data 
along with \textbf{Shift} data 
from \cite{CC09} in the table 
where all rows except the first row are from \cite{CC09}.
Details and statistics of our experiments on real data are presented in
the Appendix~\ref{appendix:centers}.


\section{Conclusions}
In this paper, we present a topological method to estimate the dimension of a manifold from its point samples with a theoretical guarantee.
The use of local topological structures helps to alleviate the dependency of our method on the regularity of point samples, and the use of persistent homology for a pair of homology groups (instead of a single homology group) helps to increase its robustness. 

It will be interesting to investigate other data analysis problems 
where topological methods, especially those based on 
local topological information (yields to efficient computations), 
may be useful.
Currently, we have conducted some preliminary experiments to demonstrate the performance of our algorithm. It will be interesting to conduct large-scale experiments under a broad range of practical scenarios, so as to better understand data in those contexts. 

\vfill\eject
\newpage
\bibliographystyle{abbrv}
\bibliography{mybib}

\newpage
\appendix

\section{Proof for Proposition \ref{ball-ball-lem}}
\label{appendix:ball-ball-lem}

We only need to show that $i_*'$ is an isomorphism
as Proposition~\ref{point-ball-lem} proves it for $i_*$.
Since the inclusion induced homomorphisms
$j_*: \HH(\M,\M-\cirD_2)\rightarrow \HH(\M,\M-z)$ 
and $i_*: \HH(\M,\M-\cirD_1)\rightarrow \HH(\M,\M-z)$ are isomorphisms
by Proposition~\ref{point-ball-lem} and $j_*=i_*\circ i'_*$,
we have that $i'_*$ is an isomorphism as well. 

\section{Missing Details in Section \ref{sec:offsets}}
\label{appendix:sec:offsets}

\paragraph{Proof of Proposition \ref{projpair}.}
By Proposition~\ref{deform}, the map $\pi_{\alpha*}$ is an isomorphism.
By Proposition~\ref{ball-intersect}, 
$D_\beta$ is a closed topological ball as $\beta<\rho(\M)$. 
Hence, $(\M,\M-\cirD_\beta)\hookrightarrow (\M,\M-\bar{p})$ induces
the isomorphism $i_*$ at the homology level, 
see Proposition~\ref{point-ball-lem}.
The observation then follows. 

\paragraph{Missing details for interleaving in section~\ref{sec:offsets}.}
From Eq.~\ref{dist-nest}, it follows that for any 
$\lambda\in(\eps, \rho(\M)-\delta)$:
\begin{eqnarray}
\MM_{\alpha,\lambda}=\pi^{-1}_\alpha(\mathring{B}_{\lambda}(p) \cap {\M}) \subset 
\pi^{-1}_\alpha(\mathring{B}_{\lambda+\eps}(\bar{p}) \cap {\M}) \subset
\mathring{B}_{\lambda+\delta - \eps}(\bar{p}) \cap \X_\alpha \subset
\mathring{B}_{\lambda+\delta}(p) \cap \X_\alpha=\BB_{\alpha,\lambda+\delta} .
\label{eq1}
\end{eqnarray}
Now take a point $x \in B_\lambda(p) \cap \X_\alpha$. 
Then $d(\pi(x) , p) \leq d(\pi(x) ,x) + d(x,p) \leq (\alpha+\eps) + \lambda$. 
Therefore, 
\begin{eqnarray}
\BB_{\alpha,\lambda}=B_\lambda(p) \cap \X_\alpha \subset 
\pi^{-1}_{\alpha}(B_{\lambda+\delta-2\eps}(p) \cap {\M})=\MM_{\alpha, \lambda+\delta-2\eps} \subset
\MM_{\alpha,\lambda+\delta}.
\label{eq2}
\end{eqnarray}
Eq.~\ref{eq1} and Eq.~\ref{eq2} provide the required nesting:
$$
\MM_{\alpha,\lambda} \subseteq \BB_{\alpha,\lambda+\delta}
\subseteq \MM_{\alpha,\lambda+2\delta}.
$$
\paragraph{Proof of Proposition \ref{persistence-intersect-ball}.}
Recall that by definition ${\BB}^{\alpha,r} = \X_\alpha - \mathring{B}_r(p)$. 
Then, for sufficient large $r > \beta + \alpha + 3\eps$, the closure of ${\mathrm{int}\,}{\BB}^{\alpha,r}$ 
is a subset of $\mathrm{int}\,({\MM}^{\alpha,\beta})$ or 
$\mathrm{int}\,({\BB}^{\alpha,\beta})$.
By the excision theorem, it follows that 
$$\HH(\X_\alpha, {\MM}^{\alpha,\beta}) 
\cong \HH(\X_\alpha - {\mathrm int}\,{\BB}^{\alpha,r}, 
{\MM}^{\alpha,\beta} - {\mathrm int}\,{\BB}^{\alpha,r})$$ 
and 
$$\HH(\X_\alpha, {\BB}^{\alpha,\beta}) 
\cong \HH(\X_\alpha - {\mathrm int}\,{\BB}^{\alpha,r}, {\BB}^{\alpha,\beta} - 
{\mathrm int}\,{\BB}^{\alpha,r}) = \HH(\X_{\alpha, r}, \X_{\alpha, r}^{\beta}),$$ 
where the isomorphisms are 
induced from canonical inclusions. The nested sequence of pairs involves only inclusion maps.
If we repeat the arguments for Proposition~\ref{im:prop} for sets 
intersecting the ball $B_r(z)$ and use Persistence 
Equivalence Theorem~\cite{EH09}, 
we get the claim of this proposition. 
To make sure that $r$ is large enough, we need that $r > \lambda + 4\delta + \alpha + 3\eps$, as well as $r > \lambda' + 2\delta + \alpha' + 3\eps$. We choose $r > \lambda + 5\delta$ to guarantee that. 

\section{Missing Details in Section \ref{sec:Rips}}
\label{appendix:sec:Rips}

\subsection{Good Cover}
\label{appendix:goodcover}

Here we prove that the set of sets $\clX_{\alpha,r}^\beta := \{ B_\alpha(p_i) \cap B_{r}(p) \cap B^\beta(p) \mid p_i \in P \}$ is a good cover for $\bigcup \clX_{\alpha,r}^\beta = \X_{\alpha,r}^\beta$. 

For convenience, denote $F_j = B_\alpha(p_i) \cap B_{r}(p) \cap B^\beta(p)$. 
Note that since $r > \beta + 2\alpha$, we have that any ball $B_\alpha(p_i)$ may intersect the boundary $\partial B_r(p)$ of $B_r(p)$, or the boundary $\partial B^\beta(p)$ of $B_\beta(p)$, (or none of the two boundaries,) but not both. 
In other words, the set $F_j$ can be of three types: 
(i) a complete ball $B_\alpha(p_i)$; (ii) a convex set which is the intersection between $B_\alpha(p_i)$ and $B_r(p)$, but not intersecting the boundary $\partial B_\beta(p)$; and (iii) a potentially non-convex set which is the difference $B_\alpha(p_i) - \mathring{B}_\beta(p)$, but not intersecting the boundary $\partial B_r(p)$. 

Now consider any subset of $\clX_{\alpha,r}^\beta$ with non-empty intersection: Since $B_\alpha(p_i)$ cannot intersect $\partial B_r(p)$ and $\partial B_\beta(p)$ simultaneously, such a subset either only consists of balls from type (i) and (ii), or from type (i) and (iii). 
Since type (i) and (ii) are both convex, their intersection must be contractible. 
If the subset consists of type (i) and (iii), then the result from Lemma 6.7 of \cite{bei:2012} shows that it is also contractible. 
Hence, the intersection of any subset of $\clX_{\alpha,r}^\beta$ is contractible, and as such $\clX_{\alpha,r}^\beta$ forms a good cover for $\X_{\alpha,r}^\beta$. By Nerve Lemma, this implies that $\nerv\clX_{\alpha,r}^\beta$ is homotopic to $\X_{\alpha,r}^\beta$; that is, 
$\nerv\clX_{\alpha,r}^\beta \approx \X_{\alpha,r}^\beta$. 

\subsection{Proof of Lemma \ref{lem:relativenerves}}
\label{appendix:lem:relativenerves}

First, we quote the following result shown in \cite{Chazal:2008}, which states that the isomorphism induced by the homotopy equivalence between a nerve and its space commute with the canonical inclusions on the spaces at the homology level. 
To be consistent with the notations of \cite{Chazal:2008}, let ${\cal N}{\cal U}$ denote the nerve on a good cover ${\cal U}$. 
\begin{proposition}[Lemma 3.4 in \cite{Chazal:2008}]
Let $X \subset X'$ be two paracompact spaces, and Let 
${\cal U}=\{U_i\}_{i\in J}$ and ${\cal U'}=\{U'_i\}_{i \in J}$ be two 
good open covers of
$X$ and $X'$ respectively, based on a same finite parameter set $J$, such that $U_i \subset U'_i$ for all $i \in J$. Then, there exist homotopy 
equivalences ${\cal N}{\cal U} \rightarrow X$ and ${\cal N}{\cal U}' \rightarrow X'$ which commute with the canonical inclusions $X\hookrightarrow X'$
and ${\cal N}{\cal U} \hookrightarrow {\cal N}{\cal U}'$ at homology and homotopy levels.
\end{proposition}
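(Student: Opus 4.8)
The plan is to prove this functorial Nerve Lemma by routing everything through the Mayer--Vietoris blowup (homotopy colimit) of a cover, which is strictly natural in the cover and maps by a homotopy equivalence both to the covered space and to the nerve. For an open cover $\mathcal{U}=\{U_i\}_{i\in J}$ of a space $Y$, write $U_\sigma=\bigcap_{i\in\sigma}U_i$ for each simplex $\sigma$ of $\mathcal{N}\mathcal{U}$, and set $\Delta\mathcal{U}=\big(\coprod_{\sigma\in\mathcal{N}\mathcal{U}}|\sigma|\times U_\sigma\big)/\!\sim$, where a face inclusion $\tau\subseteq\sigma$ together with $U_\sigma\subseteq U_\tau$ glues $|\tau|\times U_\sigma$ into $|\sigma|\times U_\sigma$ and into $|\tau|\times U_\tau$. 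There are two canonical maps out of $\Delta\mathcal{U}$: the projection $p_\mathcal{U}\colon\Delta\mathcal{U}\to Y$ recording only the point of $U_\sigma\subseteq Y$, and the projection $q_\mathcal{U}\colon\Delta\mathcal{U}\to\mathcal{N}\mathcal{U}$ recording only the point of $|\sigma|$. The homotopy equivalence $\mathcal{N}\mathcal{U}\to Y$ of the ordinary Nerve Lemma will be recovered as $p_\mathcal{U}$ composed with a homotopy inverse of $q_\mathcal{U}$.

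First I would record strict naturality. Since $U_i\subseteq U_i'$ for all $i\in J$, we get $\mathcal{N}\mathcal{U}\subseteq\mathcal{N}\mathcal{U}'$ and $U_\sigma\subseteq U_\sigma'$, hence an induced map $\Delta\iota\colon\Delta\mathcal{U}\to\Delta\mathcal{U}'$ satisfying the strict identities $p_{\mathcal{U}'}\circ\Delta\iota=j\circ p_\mathcal{U}$ and $q_{\mathcal{U}'}\circ\Delta\iota=\iota\circ q_\mathcal{U}$, where $j\colon X\hookrightarrow X'$ and $\iota\colon\mathcal{N}\mathcal{U}\hookrightarrow\mathcal{N}\mathcal{U}'$ are the canonical inclusions. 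Next I would prove the two standard halves of the (non-functorial) Nerve Lemma for each cover. Since $X$ (resp.\ $X'$) is paracompact, the open cover admits a subordinate partition of unity $\{\rho_i\}$; the formula $x\mapsto\big(\sum_i\rho_i(x)e_i,\,x\big)$ defines a section $\psi\colon X\to\Delta\mathcal{U}$ with $p_\mathcal{U}\circ\psi=\mathrm{id}_X$ and $\psi\circ p_\mathcal{U}\simeq\mathrm{id}$ by a straight-line homotopy carried out fiberwise inside the simplices (the relevant index sets span genuine simplices of $\mathcal{N}\mathcal{U}$ precisely because the corresponding $U_\sigma$ are nonempty), so $p_\mathcal{U}$ and $p_{\mathcal{U}'}$ are homotopy equivalences. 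Since the covers are good, every $U_\sigma$ is contractible; as $J$ is finite, a skeletal induction over $\mathcal{N}\mathcal{U}$ --- building $\Delta\mathcal{U}$ one simplex of the nerve at a time, attaching along cofibrations, and applying the gluing lemma for homotopy equivalences to compare $(U_\sigma)$ with the constant diagram $(\mathrm{pt})$ --- shows $q_\mathcal{U}$ and $q_{\mathcal{U}'}$ are homotopy equivalences.

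Finally I would assemble the comparison. Choose homotopy inverses $s\colon\mathcal{N}\mathcal{U}\to\Delta\mathcal{U}$ of $q_\mathcal{U}$ and $s'\colon\mathcal{N}\mathcal{U}'\to\Delta\mathcal{U}'$ of $q_{\mathcal{U}'}$, and set $\phi=p_\mathcal{U}\circ s\colon\mathcal{N}\mathcal{U}\to X$ and $\phi'=p_{\mathcal{U}'}\circ s'\colon\mathcal{N}\mathcal{U}'\to X'$; these are homotopy equivalences as composites of such. To verify $j\circ\phi\simeq\phi'\circ\iota$, note that $q_{\mathcal{U}'}\circ(\Delta\iota\circ s)=\iota\circ q_\mathcal{U}\circ s\simeq\iota\simeq q_{\mathcal{U}'}\circ(s'\circ\iota)$; since $q_{\mathcal{U}'}$ is a homotopy equivalence it may be cancelled on the left, giving $\Delta\iota\circ s\simeq s'\circ\iota$, and post-composing with $p_{\mathcal{U}'}$ and using $p_{\mathcal{U}'}\circ\Delta\iota=j\circ p_\mathcal{U}$ yields $j\circ\phi\simeq\phi'\circ\iota$. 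Applying $\HH_*(\cdot)$ and $\pi_*(\cdot)$ to this homotopy then gives the square commuting with the canonical inclusions at the homology and homotopy levels, as claimed. I expect the only genuinely non-formal step to be showing $q_\mathcal{U}$ is a homotopy equivalence for a good cover --- this is the real content of the Nerve Lemma --- whereas every assertion about functoriality follows formally from strict naturality of the blowup together with the cancellation property of homotopy equivalences.
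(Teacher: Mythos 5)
Your proof is correct and follows essentially the same route as the cited source: the paper quotes this as Lemma 3.4 of \cite{Chazal:2008} without reproving it, and both that proof and the paper's Appendix extension (Proposition \ref{nerve-relative}) run through exactly the blowup space $\Delta X$ with its two natural projections, naturality of the construction, and cancellation of homotopy equivalences. The only cosmetic difference is that the paper's version of the second projection factors through the barycentric subdivision of the nerve followed by a simplicial approximation of the identity, whereas you map directly to ${\cal N}{\cal U}$; this does not change the argument.
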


Extending the arguments in the proof of this lemma, we have the following relative homology version. We first give the proof of this result here, after which we explain how Lemma \ref{lem:relativenerves} follows from this result. 

\begin{proposition}
Let Let $X \subset X' \subset X''$ be two paracompact spaces, and Let ${\cal U}=\{U_i\}_{i \in J}$, ${\cal U'}=\{U'_i\}_{i \in J}$ 
and ${\cal U''}=\{U''_i\}_{i \in J}$ be three good open covers of
$X$, $X'$ and $X''$ respectively, based on a same finite parameter set $J$, such that $U_i \subset U'_i \subset U''_{i}$ for all $i \in J$. 
There exist commutative diagrams,
\[
\begin{tabular}{cc}
\xymatrix{
\HH(X', X) \ar[d] \ar[r] & \HH(X'', X)\ar[d] \\
\HH({\cal N}{\cal U}', {\cal N}{\cal U}) \ar[r] & \HH({\cal N}{\cal U}'',{\cal N}{\cal U})
}
&
\xymatrix{
\HH(X'', X) \ar[d] \ar[r] & \HH(X'', X')\ar[d] \\
\HH({\cal N}{\cal U}'', {\cal N}{\cal U}) \ar[r] & \HH({\cal N}{\cal U}'',{\cal N}{\cal U}')
}
\end{tabular}
\]
 where horizontal maps are induced from canonical inclusions $(X', X) \hookrightarrow (X'',X)$, $({\cal N}{\cal U}', {\cal N}{\cal U})\hookrightarrow ({\cal N}{\cal U}'',{\cal N}{\cal U})$, $(X'', X) \hookrightarrow (X'',X')$, $({\cal N}{\cal U}''$, and ${\cal N}{\cal U})\hookrightarrow ({\cal N}{\cal U}'',{\cal N}{\cal U}')$; while vertical maps are isomorphisms.
\label{nerve-relative}
\end{proposition}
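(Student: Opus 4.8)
The plan is to reduce this relative statement to the absolute version already quoted from \cite{Chazal:2008} (Lemma 3.4) by applying that result simultaneously to the three pairs of nested covers and then feeding the resulting commuting squares into the long exact sequences of the pairs together with the Steenrod five-lemma (Proposition~\ref{fivelem}). First I would invoke Lemma 3.4 of \cite{Chazal:2008} three times: once for the inclusion $X \subset X'$ with covers ${\cal U} \subset {\cal U}'$, once for $X \subset X''$ with ${\cal U} \subset {\cal U}''$, and once for $X' \subset X''$ with ${\cal U}' \subset {\cal U}''$. This yields homotopy equivalences $\phi_{\cal U}: {\cal N}{\cal U} \rightarrow X$, $\phi_{{\cal U}'}: {\cal N}{\cal U}' \rightarrow X'$, $\phi_{{\cal U}''}: {\cal N}{\cal U}'' \rightarrow X''$ that can be chosen consistently (the construction in \cite{Chazal:2008} is functorial in the index set, so the same maps work for all three applications), and they commute with all the canonical inclusions at the homology and homotopy levels.

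Next I would pass to the long exact sequences of the pairs. Since $\phi_{\cal U}$, $\phi_{{\cal U}'}$, $\phi_{{\cal U}''}$ are homotopy equivalences, the induced maps on absolute homology $\HH(X)\cong\HH({\cal N}{\cal U})$, $\HH(X')\cong\HH({\cal N}{\cal U}')$, $\HH(X'')\cong\HH({\cal N}{\cal U}'')$ are isomorphisms. Consider the ladder between the long exact sequence of the pair $(X',X)$ and that of $({\cal N}{\cal U}',{\cal N}{\cal U})$, with vertical maps induced by the $\phi$'s; the commutativity of the relevant squares follows from the commutativity asserted by Lemma 3.4 (the absolute terms commute by that lemma, and the connecting-homomorphism squares commute by naturality of the long exact sequence). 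Four out of every five vertical maps are the absolute isomorphisms just mentioned, so Proposition~\ref{fivelem} gives that the middle vertical map $\HH(X',X)\rightarrow\HH({\cal N}{\cal U}',{\cal N}{\cal U})$ is an isomorphism. The same argument applied to $(X'',X)$ versus $({\cal N}{\cal U}'',{\cal N}{\cal U})$ and to $(X'',X')$ versus $({\cal N}{\cal U}'',{\cal N}{\cal U}')$ shows the other two vertical maps in the statement are isomorphisms.

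It remains to check the commutativity of the two squares in the statement, i.e. that the vertical isomorphisms just constructed are compatible with the horizontal inclusion-induced maps $\HH(X',X)\rightarrow\HH(X'',X)$ and $\HH(X'',X)\rightarrow\HH(X'',X')$. This is where the bookkeeping lives: one has to verify that the relative isomorphisms ${\rm rel}\phi_{?,?}$ obtained from the five-lemma are themselves induced (up to chain homotopy) by a common choice of underlying maps of pairs, so that naturality of relative homology under maps of pairs applies. The cleanest way is to observe that all of the absolute homotopy equivalences can be taken to be restrictions of a single simplicial (or continuous) map ${\cal N}{\cal U}''\rightarrow X''$ carrying the subcomplexes appropriately — which is exactly the content of the uniform construction in \cite{Chazal:2008} over the shared index set $J$ — and then every square in sight commutes on the nose at the chain level, hence on homology.

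\textbf{Main obstacle.} The technical heart is not the five-lemma bookkeeping but ensuring that the three separate applications of Lemma 3.4 of \cite{Chazal:2008} can be made with one coherent system of maps rather than three a priori unrelated homotopy equivalences; without this coherence the squares in the statement need not commute. I expect to handle this by reproducing (or citing more carefully) the explicit construction in \cite{Chazal:2008}: the homotopy equivalence between a good cover's nerve and its union is built from a partition of unity subordinate to the cover, and choosing a single partition of unity on $X''$ and restricting it to $X'$ and $X$ makes all the maps restrictions of one another, forcing strict commutativity of every diagram involved.
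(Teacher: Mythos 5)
Your proposal is correct and follows essentially the same route as the paper: apply the construction underlying Lemma 3.4 of \cite{Chazal:2008} coherently to all three covers (so that the comparison maps at the $X$ and $X'$ levels are restrictions of those at the $X''$ level), and then feed the resulting commuting ladders of long exact sequences into the five-lemma (Proposition~\ref{fivelem}) to upgrade the absolute isomorphisms to relative ones. The only imprecision is describing the nerve--space comparison as a single map ${\cal N}{\cal U}''\rightarrow X''$; in the paper it is a zigzag of restriction-compatible homotopy equivalences $X'' \xleftarrow{p''} \Delta X'' \xrightarrow{q''} \Gamma'' \xrightarrow{g''} {\cal N}{\cal U}''$ (through the blow-up complex and a barycentric subdivision), but this does not affect your argument since the induced maps on homology still form strictly commuting ladders.
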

%
%
%
\begin{proof}
From the good covers $\cal U$ of $X$, one can construct a topological space $\Delta X$ as in \cite{Chazal:2008} such that
the following diagram commutes 
\[
\xymatrix{
X \ar@{^{(}->}[r] 
& X' \ar@{^{(}->}[r] 
& X'' \\
\Delta X \ar[u]_{p} \ar@{^{(}->}[r] 
& \Delta X' \ar[u]_{p'} \ar@{^{(}->}[r] 
& \Delta X'' \ar[u]_{p''} 
}
\]
where $p$ and $p'$ are restrictions of $p''$ to $\Delta X$ and $\Delta X'$ respectively, and
$p$,$p'$ and $p''$ are homotopy equivalences.
Therefore, we have a map of pairs $p':(\Delta X', \Delta X) \rightarrow (X',X)$. Considering 
the two long exact sequences of pairs $(\Delta X', \Delta X)$ and $(X',X)$ and using the
same arguments in Proposition \ref{pair-iso}, it follows that $p'_{*}:\HH(\Delta X', \Delta X) \rightarrow \HH(X',X)$ is an
isomorphism. Similarly, $p''$ is also a map of pairs, and the induced homomorphisms $p''_{*}:\HH(\Delta X'', \Delta X) \rightarrow \HH(X'',X)$ is also 
an isomorphism. Given that both $p'$ and $p''$ are maps of pairs, we have the following commutative diagram of pairs: 
\[
\xymatrix{
(X', X) \ar@{^{(}->}[r] 
& (X'',X)  \\
(\Delta X',\Delta X) \ar[u]_{p'} \ar@{^{(}->}[r] 
& (\Delta X'',\Delta X) \ar[u]_{p''}  
}.
\]
It induces the following commutative diagram at homology level where vertical maps are isomorphisms,
and horizontal maps are induced from canonical inclusions.
\[
\xymatrix{
\HH(X', X) \ar[r] 
& \HH(X'',X)  \\
\HH(\Delta X',\Delta X) \ar[u]_{p'_{*}} \ar[r] 
& \HH(\Delta X'',\Delta X) \ar[u]_{p''_{*}}  
}.
\]
Next, let $\Gamma$ be the first barycentric subdivision of ${\cal N}{\cal U}$, and
$\Gamma'$ for ${\cal N}{\cal U}'$ and $\Gamma''$ for ${\cal N}{\cal U}''$, respectively. 
It is shown in \cite{Chazal:2008} that the following diagram commutes 
\[
\xymatrix{
\Delta X \ar[d]^{q} \ar@{^{(}->}[r] 
& \Delta X' \ar[d]^{q'} \ar@{^{(}->}[r] 
& \Delta X'' \ar[d]^{q''} \\
\Gamma \ar@{^{(}->}[r] 
& \Gamma'  \ar@{^{(}->}[r] 
& \Gamma''  
}
\]
where $q$ and $q'$ are the restrictions of $q''$ to $\Delta X$ and $\Delta X'$, respectively.
Following the same arguments as above, one obtains the following commutative diagram at homology level with vertical isomorphisms
and horizontal maps induced from canonical inclusions.
\[
\xymatrix{
\HH(\Delta X', \Delta X) \ar[d]^{q'_{*}} \ar[r] 
& \HH(\Delta X'', \Delta X) \ar[d]^{q''_{*}} \\
\HH(\Gamma', \Gamma) \ar[r] 
& \HH(\Gamma'', \Gamma)  
}
\]
It is known that simplicial approximation $g'':\Gamma''\rightarrow {\cal N}{\cal U}''$ of the identity map $id:|\Gamma''| \rightarrow |{\cal N}{\cal U}|$
commutes with canonical inclusions and induces an isomorphism between homology \cite{Munk75}.
Therefore, we have the following commutative diagram
\[
\xymatrix{
\Gamma \ar[d]^{g} \ar@{^{(}->}[r] 
& \Gamma' \ar[d]^{g'} \ar@{^{(}->}[r] 
& \Gamma'' \ar[d]^{g''} \\
{\cal N}{\cal U} \ar@{^{(}->}[r] 
& {\cal N}{\cal U}' \ar@{^{(}->}[r] 
& {\cal N}{\cal U}''  
}
\]
where $g$ and $g'$ are the restriction of $g''$ to $\Gamma$ and $\Gamma'$ respectively.
As before, there exists following commutative diagram at homology level with vertical isomorphisms and horizontal maps induced by
canonical inclusions,
\[
\xymatrix{
 \HH(\Gamma', \Gamma)  \ar[d]^{g'_{*}} \ar[r] 
& \HH(\Gamma'', \Gamma)   \ar[d]^{g''_{*}} \\
\HH({\cal N}{\cal U}', {\cal N}{\cal U}) \ar[r] 
& \HH({\cal N}{\cal U}'', {\cal N}{\cal U}) 
}
\]
Combining these three commutative diagrams at homology level, the first commutative diagram in the proposition follows immediately.
A similar argument shows that the second commutative diagram in the proposition holds as well.
\end{proof}

\vspace*{0.07in}\noindent Using this proposition, one can easily obtain that 
$$\mathrm{im}\left(\HH(\nerv\clX_{\alpha, r}, \nerv\clX_{\alpha, r}^{\lambda+3\delta}) \rightarrow 
\HH(\nerv\clX_{\alpha', r}, \nerv\clX_{\alpha', r}^{\lambda'+\delta'}) \right) \cong  \HH({\M}, {\M} - \bar{p}) $$
because
$$\mathrm{im}\left(\HH(\X_{\alpha, r}, \X_{\alpha, r}^{\lambda+3\delta}) 
\rightarrow \HH(\X_{\alpha', r}, \X_{\alpha', r}^{\lambda'+\delta'})\right) \cong \HH({\M}, {\M} - \bar{p}). $$ 
Indeed, for convenience, set $A_1 =\X_{\alpha, r}$, $B_1 =\X_{\alpha, r}^{\lambda+3\delta}$, $A_2 = \X_{\alpha', r}$ and $B_2 = \X_{\alpha', r}^{\lambda' +\delta'}$;
and set ${\cal A}_1 =\clX_{\alpha, r}$, ${\cal B}_1 =\clX_{\alpha, r}^{\lambda+3\delta}$, ${\cal A}_2 = \clX_{\alpha', r}$ and ${\cal B}_2 = \clX_{\alpha', r}^{\lambda' +\delta'}$. We apply the above proposition twice, once to the three spaces $B_1\subset A_1 \subset A_2$, and once to the three spaces $B_1 \subset B_2 \subset A_2$. This provides the following diagram, where the commutativity of each square follows from Proposition \ref{nerve-relative}. 
\[
\xymatrix{
\HH(A_1, B_1) \ar[d] \ar[r] & \HH(A_2, B_1) \ar[d] \ar[r] & \HH(A_2, B_2) \ar[d] \\
\HH(\nerv{\cal A}_1, \nerv{\cal B}_1) \ar[r] & \HH(\nerv{\cal A}_2, \nerv{\cal B}_1) \ar[r] & \HH(\nerv{\cal A}_2, \nerv{\cal B}_2) 
}
\]
Since all vertical homomorphisms are isomorphisms, we have that 
$$\mathrm{im}\left(\HH(A_1, B_1) \rightarrow \HH(A_2, B_2)\right) \cong \mathrm{im}\left(\HH(\nerv{\cal A}_1, \nerv{\cal B}_1), \HH(\nerv{\cal A}_2, \nerv{\cal B}_2)\right). $$
This finishes the proof of Lemma \ref{lem:relativenerves}. 

\section{The Algorithm to Compute $\mathrm{im}({{j_\alpha}_*})$}
\label{appendix:matrix-algorithm}
Recall that $j_\alpha$ is the inclusion of pairs  
$j_\alpha: (A_1, B_1) \hookrightarrow (A_2,B_2)$,
where $B_1 \subset A_1 \subset A_2$ and $B_1 \subset B_2 \subset A_2$.
To compute $\mathrm{im}({{j_\alpha}_*})$, 
we order the simplices of $A_2$ in a proper way to 
build a filtration such that  
the rank of $\mathrm{im}({j_\alpha}_*)$
can be read off from the reduced boundary matrix built from the filtration. 
Precisely, the filtration adds the simplices of $A_2$ as follows.
The simplices in $B_2\setminus A_1$ appear first.
Then the simplices in $B_1$, $(B_2\setminus B_1)\cap A_1$, 
$A_1 \setminus B_2$ and $A_2 \setminus (A_1\cup B_2)$ follow sequentially.
This order is illustrated in Figure \ref{fig:relative_homology}.
For simplicity, let $\mathrm{R}({\sf x}, {\sf y})$ denote submatrix 
occupying the 
rectangle region with ${\sf x}$ as its top left corner point
and ${\sf y}$ as its bottom right corner point in 
Figure \ref{fig:relative_homology}.
\begin{figure}[ht!]
\begin{center} 
\includegraphics[scale=0.5]{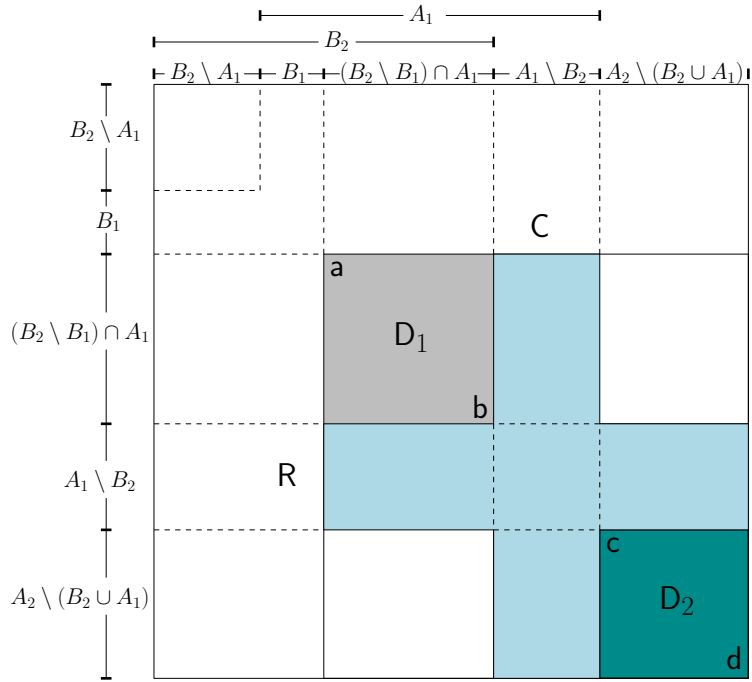} 
\end{center}
\caption{The order of the simplices in the filtration for $A_2$.
} 
\label{fig:relative_homology}
\end{figure}
It is known \cite{EH10} that the rank of $\HH(A_1, B_1)$ 
(or $\HH(A_2, B_2)$) can be 
computed by reducing the submatrix ${ M}_{1} = \mathrm{R}({\sf a}, {\sf c})$ 
(or ${ M}_{2} = \mathrm{R}({\sf b}, {\sf d})$) 
in Figure \ref{fig:relative_homology}.
For our purpose, 
the submatrix ${ M}= \mathrm{R}({\sf a}, {\sf d})$  
in Figure \ref{fig:relative_homology},
which  contains 
both ${ M}_{1}$ and ${ M}_{2}$, 
will be
reduced in the same way as the classical persistent homology algorithm does
\cite{EH10}.
Let $\widehat{ M}$ denote the matrix reduced from ${ M}$.
It will be shown that  
the rank of $\mathrm{im}({j_\alpha}_*)$  can be read off 
from $\widehat{ M}$. 

Recall that the $\mathrm{im}({j_\alpha}_*)$ in dimension $k$ contains the $k$-cycles
of $\HH_k(A_1, B_1)$  which are nontrivial in both 
$\HH_k(A_1, B_1)$ and $\HH_k(A_2, B_2)$.
In particular, each  $k$-simplex in the 
collection of simplices $A_1 \setminus B_2$ whose column in the reduced matrix $\widehat{{ M}}$
is a zero column 
(i.e., a zero column corresponding to a $k$-simplex
in the light blue column region ${\sf C}$ of Figure \ref{fig:relative_homology})
represents a $k$-cycle in both $\HH_k(A_1, B_1)$ and $\HH_k(A_2, B_2)$.
Let $\#Zero_k$ denote the number of such zero columns in $\sf C$.
If one such $k$-simplex is paired by a $(k+1)$-simplex in $A_2 \setminus B_2$
(i.e., the row in  
$\widehat{{ M}}$ corresponding to this simplex which is in 
the light blue row region $\sf R$ of Figure  \ref{fig:relative_homology}
has a unique $1$ ),
its corresponding $k$-cycle is a $k$-boundary in  $\HH_k(A_2, B_2)$.
Let $\#Bdry_k$ denote the number of such $k$-simplices.
Since the $k$-cycles in $\HH_k(A_1, B_1)$ corresponding to
zero columns which appear  before the columns in $\sf C$ 
contain only simplices from $B_2$, 
they all have trivial image in $\HH_k(A_2, B_2)$.
It is then immediate that the rank of $\mathrm{im}({j_\alpha}_*)$ in dimension $k$ 
equals $\#Zero_k - \#Bdry_k$,
namely the number of  
zero columns in $\sf C$ which correspond to unpaired $k$-simplices.
Once the matrix ${ M}$ is reduced, 
it is straightforward to compute $\#Zero_k - \#Bdry_k$.
If there are $n$ simplices in $A_2\setminus B_2$, this algorithm 
runs in $O(n^{3})$ time due to the reduction of 
${ M}$. 

\section{Graph Based Central Points and Experimental Details on Real Data }
\label{appendix:centers}
A graph on sample points is built by connecting two points 
within certain distance.
For every vertex $v$ of each component of this graph, 
the shortest path tree with root $v$ is computed and then the largest distance 
from $v$ to leaves of this shortest path
tree  is recorded. The vertex whose distance to leaves of its 
shortest path tree is 
the minimum among those vertices in the component containing it, 
is considered to 
be the center of its component. 
Intuitively, these centers are away from the boundary 
and less likely to be outliers. 
We then discard the centers of 
components with few points. 
For remaining centers, we compute the local homology and report the intrinsic 
dimension of the manifold as that of the $n$-sphere whose homology is the 
same as the most common local homology of these centers. 
To accelerate the computation,
if a component has a 
significantly large number of vertices, we generate a uniform sparse subsample
from the points within some radius of its center and then compute local homology
on the subsample points.

\begin{table}[htb]
\centering
\begin{tabular}{|c|c|c|c|c|}
\hline
& \multicolumn{2}{|c|}{$n$-{\sc sphere}} & {\sc Est. dim } &  {\sc percentage} \\
\hline
\textbf{Head}&  n=3 & $53/53$ &  $3$ &  $100\% (53 / 53)$ \\
\hline
\multirow{3}{*}{\textbf{D1}} & 
n=3 & $4/37$  & \multirow{3}{*}{$4$ }  &\multirow{3}{*}{$83.7\% (31 / 37)$}  \\
&  n=4& $31/37$ &  &\\
&  n=5& $2/37$ &  &\\
\hline
\multirow{2}{*}{\textbf{D0}} & 
n=2 & $2/9$  & \multirow{2}{*}{$3$ }  & \multirow{2}{*}{$77.7\% (7 / 9)$}  \\
& n=3& $7/9$ & &  \\
\hline
\end{tabular}
\caption{Estimated dimension for real data}
\label{real-tab}
\end{table}
We applied this strategy on \textbf{Head}, \textbf{D1} and \textbf{D0}.
All of them have only one major component in the graph which 
connects two points within a distance that
is several times the 
distance of the closest pair in the sample points. 
For \textbf{Head}, a subsample of around $138$ points was taken 
from $505$ points in the neighborhood of the center of the major component.
We took a subsample of around $148$ points from $943$ points 
in the neighborhood of 
of \textbf{D1}'s center, and around $102$ points from $3494$ points 
in the neighborhood of \textbf{D0}'s center.
Since the uniform subsamples were taken randomly, 
one will be biased to claim the result from one particular subsample.
Therefore, we repeated the local homology computation at the center with fixed parameters $100$ times.
Note that the points in the subsamples changed each time due to random sampling.
Among these $100$ computations, we only counted the valid ones which returned the local homology 
of $\redHH(\sphere^{n})$ for some $n$. 
The distribution of valid computations is shown in Table~\ref{real-tab}.
The $n$-{\sc sphere} column shows the number of valid computations 
with the reduced homology of $\redHH(\sphere^{n})$
for each $n$. The total number of valid computations is also included in this column.
The {\sc est. dim} column gives the estimated dimension.
The {\sc percentage} column shows the percentage of computations with the estimated dimension
in all valid computations.
For the \textbf{Head} data, the detected dimension from our method
matches the ground truth which is  $3$.
Although the ground truth dimensions 
for \textbf{D1} and 
\textbf{D0} are unknown, ours along with SLIVER, PN, LPCA and ISOMAP report
dimension in range $[3,7]$ for  \textbf{D1} and in range $[2,9]$ for \textbf{D0}.

\end{document}